\def\th@plain{%
  \thm@notefont{}%
  \itshape %
}
\def\th@definition{%
  \thm@notefont{}%
  \normalfont %
} 
\newtheoremstyle{pfof}
{\topsep}
{\topsep}
{\normalfont}
{0pt}
{\bfseries}
{}
{5pt plus 1pt minus 1pt}
{}
\theoremstyle{pfof}
\newtheorem*{proofof}{Proof of}
\theoremstyle{definition}
\newtheorem*{defn}{Definition}
\newtheorem{claim}{Claim}
\newtheorem{ex}{Example}
\theoremstyle{plain}
\newtheorem{thm}{Theorem}
\newtheorem{prop}{Proposition}
\newtheorem{cor}{Corollary}
\newtheorem{lem}{Lemma}
\newtheorem{asn}{Assumption}
\theoremstyle{remark}
\DeclareMathOperator{\supp}{supp}
\title{Bargaining with Absentmindedness}
\author{Cole Wittbrodt\thanks{Wittbrodt: Columbia University, Graduate School of Arts and Sciences, Department of Economics; E-mail: \texttt{cole.wittbrodt@columbia.edu}. This paper was previously circulated under the title \textit{What Slips the Mind Stalls the Deal: Delay in Bargaining with Absentmindedness}. I thank Antoine Chapel, Yeon-Koo Che, Grace Chuan, Laura Doval, Michael Lee, Qingmin Liu, and Alessandro Pavan for helpful comments and feedback.}} 
\begin{document}
\maketitle

\begin{abstract}
   Delay is the norm in bargaining. I propose a novel source of bargaining delay: absentmindedness. Instead of interpreting absentmindedness as a literal memory friction, I use absentmindedness to represent a broader form of bounded rationality in dynamic games where players cannot perfectly track a game's progression. Bargainers unable to finely condition play on the stage of dynamic interaction can credibly refuse last-minute ultimatums. Other parties that anticipate this behavior are driven to offer preemptive concessions to avoid a breakdown in negotiations. Absentmindedness is thus a source of bargaining power, even for players who never make offers. This bargaining power comes at the cost of efficiency; I show that there can be equilibria where offers are rejected on the path of play.
\end{abstract}

\newpage

\vspace{0.5cm} %
\noindent %
\begin{quote}
\itshape Jarndyce and Jarndyce drones on. This scarecrow of a suit has, over the course of time, become so complicated, that no man alive knows what it means. The parties to it understand it least... Scores of persons have deliriously found themselves made parties in Jarndyce and Jarndyce without knowing how or why.

\hfill --- {\normalfont \text{Charles Dickens,} \textit{Bleak House}}
\end{quote}
\vspace{0.5cm} %

\section{Introduction}

Many classic finite-horizon bargaining models fail to generate delay, despite the fact that delay often occurs in reality. %
Backward induction is simply too powerful. Players know how the game will end, and prefer to hasten the inevitable. To reconcile the theory with the empirical reality of delayed agreements, one might question whether agents truly reason backwards from the end of the game. This concern is especially relevant when the time horizon is long. Hence, the literature has often turned to infinite-horizon models to capture such situations, but these models seldom generate delay without also introducing incomplete information.

This paper takes a different perspective. I introduce a finite-horizon model of bargaining where players cannot perfectly track the progress of the game. In particular, I identify \textit{absentmindedness} as a potential source of bargaining delay. Absentmindedness is a form of imperfect recall; in a dynamic setting absentminded agents do not recall their own past actions, or they forget information that they once knew. Rather than interpreting this as a literal lapse in memory, I view absentmindedness as a conceptual tool that captures this broader phenomenon of bounded rationality in dynamic games. Specifically, it reflects the idea that agents may be unable to finely condition on the exact stage of a dynamic interaction.\footnote{Markov strategies also do not condition on the precise stage of a dynamic interaction, but are typically considered in an environment where agents \textit{could} condition on time or histories. In my environment, players cannot condition in this way. This amounts to a restriction on the space of possible deviations in games, and affects how players evaluate the payoffs from such deviations. Predictions differ substantially in both frameworks.} 

The mechanism by which absentmindedness can result in inefficient delay is straightforward. Bargainers who do not know whether the final period, or trade deadline, has arrived can \textit{credibly reject last minute offers}, rendering them immune to standard ultimatum-bargaining threats. Crucially, this bounded rationality is understood by the other party. Since an absentminded party can reject an offer at the "eleventh hour", the other party may make concessions in equilibrium. This results in an equilibrium feedback: the possibility of concessions by the other party encourages absentminded players to reject unfavorable offers. Such offers can be made on the equilibrium path, hence there is a possibility for inefficient delay.

I demonstrate this mechanism by first considering the simplest possible model of negotiation. I consider a two-player finite-horizon bargaining game where one player (the proposer) offers a course of action from a binary set $\{a_P,a_R\}$, and the other (the respondent) decides whether to accept or reject the offer in each period. The proposer and respondent disagree over their preferred course of action; the proposer prefers to take action $a_P$ and the respondent prefers to take action $a_R$. If the game ends before an offer is accepted, the deal falls through and both parties earn a payoff of $0$.

With perfect recall, backward induction yields a unique solution to this game: immediate agreement. If, for instance, the respondent is absentminded, backward induction fails; the respondent does not necessarily accept an offer in the final period since he fails to realize that the trade deadline looms. To analyze this environment, I characterize all equilibria when parties are sufficiently patient. As in the perfect recall case, there is an equilibrium where the proposer always makes her preferred offer, and the respondent always accepts. There is no equilibrium delay. Unlike the perfect recall case, there is also an equilibrium where the proposer always offers $a_R$, and the respondent would always reject $a_P$. Since this event is off-path, if the respondent believes that he is not at the trade deadline (with sufficiently high probability) upon receiving an offer of $a_P$, he will rationally reject. Hence, the respondent has bargaining power in this equilibrium. Much like the other pure strategy equilibrium, there is no delay in the respondent-preferred equilibrium.

The last and arguably most conceptually relevant equilibrium involves both parties randomizing. The proposer mixes between $a_P$ and $a_R$ only in the final period of the game, as the trade deadline looms. In every other period, the proposer offers $a_P$. The respondent randomizes between rejecting and accepting $a_P$, balancing the value of accepting the offer of $a_P$ immediately against the value of potentially receiving his preferred offer at a later date. The proposer attempts repeatedly to convince the respondent to concede by offering $a_P$ until the final period. At this point, the proposer considers offering $a_R$ to prevent the deal from falling through. When the proposer offers $a_R$, she concedes and reveals her private information that the deadline has arrived. There is, of course, positive probability of delay in this equilibrium.

Both the respondent-preferred equilibrium and the equilibrium with randomization do not have natural analogues in the perfect recall case. The respondent's \textit{rational confusion} is the source of his bargaining power; since the respondent can rationally reject unfavorable offers in the final period of the game, the proposer considers sending a favorable offer to prevent the deal from falling through. Moreover, deals are made most frequently immediately prior to the deadline.

The probability of delay in the mixing equilibrium does not depend on the discount factor nor on the time horizon (provided players are sufficiently patient).\footnote{With imperfect recall, the classic equivalence between behavioral strategies and mixed strategies \citep{kuhn} breaks down. Here, I refer to the "mixing equilibrium" as the equilibrium where the respondent plays a behavioral strategy which randomizes over accepting and rejecting $a_P$.}  The probability of reaching a deal immediately is equal to the ratio of the proposer's payoff when action $a_R$ is chosen to her payoff when action $a_P$ is chosen in order to hold the proposer indifferent between offering $a_R$ and $a_P$ in the final period. The probability the deal falls through, however, is increasing in the discount factor. When the respondent is more patient, the cost of rejecting an offer of $a_P$ falls. To keep the respondent indifferent between accepting and rejecting, the proposer must concede less frequently in the final period of the game, thus increasing the chance that the deal falls through.

In an extension to the baseline model, I allow parties to negotiate not only over a course of action but also over how to split the surplus. In this setting, there are equilibria with delay if and only if players are patient (Proposition \ref{P:contdelay}). Even though delay entails no loss of efficiency when players are patient, there is efficiency loss due to the failure to reach a deal prior to the deadline. %
I demonstrate that equilibria with delay exist constructively by considering cases where parties play strategies which effectively constrain the set of viable offers. In particular, the proposer will send some offer that leaves her with a greater share of the surplus in early periods (a "greedy offer"), and mix between this greedy offer and a more fair offer at the deadline. The respondent must reject any off-path offer which leaves him with less surplus than the fair offer. Thus, from the proposer's perspective, the only viable offers are from this binary set, and play the role of the proposer-preferred and respondent-preferred offers in the baseline model. In order for this respondent strategy to be optimal, the proposer must "punish" herself by following up any rejected off-path offer in the first period with a fair offer in the next period. This self-punishment is sequentially rational, since the proposer is indifferent between the greedy and the fair offer. Constraining the set of viable offers to a binary set allows one to apply insights developed in the baseline model to characterize equilibria with delay. Moreover, I show that \textit{every} equilibrium with delay where the proposer's strategy is history-independent on the equilibrium path exhibits this simple binary structure (Proposition \ref{P:necessity}).

I also consider some alternative bargaining protocols. For instance, if the absentminded party makes offers, the characterization results do not change much. However, an equilibrium with delay can only be sustained for intermediate discount factors. The cognizant (non-absentminded) respondent always accepts offers in the final period of the game, so when players are sufficiently patient the absentminded player will make greedy offers, even if they are rejected with high probability in period $1$. This difference in results is due to the fact that the cognizant player never allows a deal to fall through when she responds to offers. If both parties are absentminded, the results are qualitatively similar. %

\subsection*{Related Literature}

This paper contributes to an extensive literature on bargaining, emanating from the classic setting of \cite{rubinstein82}, which shows that a deal is reached immediately under complete information. In one-sided incomplete information settings, where the uninformed party makes offers (\cite{gsw} and \cite{flt}), delay vanishes in the patient limit as deals are reached in the "twinkle of an eye" \citep{coase} in any weakly stationary equilibrium. Bargaining models with two-sided asymmetric information can generate delay when gains from trade are not common knowledge (\cite{crampton1984}, \cite{cho1990}, \cite{crampton1991}). Another approach to generating delay with information asymmetry is \textit{higher-order uncertainty}, where players are uncertain about the other's beliefs over their valuation \citep{feinbergandy}.

The former papers consider information asymmetries regarding the underlying value of trade. The present article takes a different approach: there is common knowledge of the value of a deal, but uncertainty regarding the trade environment itself. In this sense, the present article is most similar to the literature on \textit{reputational bargaining}, where there are obstinate bargainer types who only accept certain offers. \cite{abreugul2000} study the reputational bargaining framework in an infinite horizon setting, and \cite{fanningjmp} extends \cite{abreugul2000} to a finite-horizon setting to study the deadline effects.\footnote{\cite{spier1992} shows a similar deadline effect, but under a different economic mechanism where one party \textit{prefers} to delay the deal.} The economic mechanism highlighted in the present article is similar to that in the reputational bargaining literature: players may delay in hope that the other party makes a concession. While not a reputational bargaining paper, \cite{friedenberg2019} also shows bargaining delay can arise without invoking private values by demonstrating the role of \textit{strategic uncertainty} in bargaining games. Finally, a working paper version of \cite{ravidaer} studies rational inattention in buyer-seller bargaining, and shows that equilibrium delay can arise. The mechanism is similar in that the bounded rationality gives the buyer bargaining power, at the cost of efficiency.

This paper also contributes to the literature on games of imperfect recall. To my knowledge, few papers study absentmindedness in economic games. Most existing work studies a particular decision problem, the so-called \textit{absentminded driver problem} (\cite{pr}, \cite{ahp}, and \cite{gilboa1997}). A notable exception is \cite{lambertmarpleshoham2019}, which extends classic solution concepts, namely agent equilibria \citep{strotz55}, sequential equilibria \citep{krepswilson82}, and perfect equilibria \citep{selten75} to games with imperfect recall. My definition of equilibria is a special case of the version of agent equilibria defined in \cite{lambertmarpleshoham2019} --- though I also show that equilibria satisfy a sequential equilibrium style refinement. \cite{hillaskvasov} also define solution concepts in games with imperfect recall, but my solution concept is closer to those in \cite{lambertmarpleshoham2019}. Two recent applications of the solution concepts in \cite{lambertmarpleshoham2019} are \cite{ChenManuscript-CHEIRA-6} and \cite{szenteslevy}, which study imperfect recall games in the context of artificial intelligence.

\section{Model}\label{S:baseline}

Two parties negotiate over a course of action. There is a binary set of actions $\{a_P,a_R\}$. One player is the proposer $(P)$, and offers a course of action in each period $t=1,...,T$. If accepted by the other player (the respondent $R$), the proposed action is implemented. If a deal is reached, undiscounted payoffs are given by Table \ref{tab:placeholder}:

\begin{table}[h!]
    \centering
    \large
    \begin{tabular}{c|cc}
         & $a_P$ & $a_R$\\\hline
       P  & $u_P^P$ & $u_R^P$\\
        R & $u_P^R$ & $u_R^R$ \\
    \end{tabular}
    \caption{Payoff Matrix. Rows: Players. Columns: Actions.}
    \label{tab:placeholder}
\end{table}
Subscripts on payoffs always refer to the action and superscripts refer to the player. Assume, for now, that all payoffs are strictly positive. Each player discounts future payoffs with a discount factor $0<\delta\leq 1$. That is, if action $a_i$ is accepted in period $t$, payoffs are $\delta^tu_i^P$ and $\delta^tu_i^R$ for the proposer and respondent respectively. If no offer is accepted by the end of period $T$, both parties receive a payoff of $0$.

The respondent is \textit{absentminded}.\footnote{See, e.g., \cite{pr}.} She cannot \textit{a priori} distinguish between the time periods $t=1,...,T$, and cannot recall the history of the game. The proposer has perfect recall. Figure \ref{fig:game_tree} displays the game tree when $T=2$. 

\begin{figure}[h!]
    \centering
    \begin{tikzpicture}[
        grow=right,
        level distance=2.5cm,
        sibling distance=1.5cm,
        every node/.style = {font=\small},
        edge from parent/.style = {draw},
        edge from parent path={(\tikzparentnode.east) -- (\tikzchildnode.west)},
        dot/.style = {circle, fill, inner sep=1.5pt}
    ]

    \node[dot, label=left:{$t=1$}] {}
        child {node[dot] (bottomnodeone) {} edge from parent node[below] {$a_R$}}
        child {node[dot] (topnodeone) {} edge from parent node[above] {$a_P$}};

    \draw (topnodeone) -- ++(0,1) node[above] {$(u_P^P, u_P^R)$} node[midway, left] {$A_P$};
    \draw (bottomnodeone) -- ++(0,-1) node[below] {$(u_R^P, u_R^R)$} node[midway, left] {$A_R$};

    \draw (topnodeone) -- ++(2,0) coordinate (topcont) node[midway, above] {$R_P$};
    \draw (bottomnodeone) -- ++(2,0) coordinate (bottomcont) node[midway, below] {$R_R$};

    \draw (topcont) -- + (2,0) coordinate (topend) node[midway, above] {$a_P$};
    \draw (bottomcont) -- + (2,0) coordinate (bottomend) node[midway, below] {$a_R$};

    \node[dot, label=above:{$t=2$}] at (topcont) {};
    \node[dot, label=below:{$t=2$}] at (bottomcont) {};
    \node[dot] at (topend) {};
    \node[dot] at (bottomend) {};

    \draw (topcont) -- (bottomend) node[midway, right, xshift=5pt] {$a_R$};
    \draw (bottomcont) -- (topend) node[midway, left, xshift=-5pt] {$a_P$};

    \draw (topend) -- ++(0,1) node[above] {$(\delta u_P^P, \delta u_P^R)$} node[midway, left, yshift=4pt] {$A_P$};
    \draw (bottomend) -- ++(0,-1) node[below] {$(\delta u_R^P,\delta u_R^R)$} node[midway, left, yshift=-4pt] {$A_R$};
    \draw (topend) -- ++(2,0) node[midway, above] {$R_P$} node[right] {$(0,0)$};
    \draw (bottomend) -- ++(2,0) node[midway, above] {$R_R$} node[right] {$(0,0)$};

    \draw[dashed, bend left=45] (topnodeone) to (topend);
    \draw[dashed, bend right=45] (bottomnodeone) to (bottomend);

    \end{tikzpicture}
    \caption{\small Game tree when $T=2$. The nodes after receiving an offer $a_P$ are in the same information set. The nodes after receiving an offer $a_R$ are in the same information set. In each payoff pair, the first element represents the proposer's payoffs and the second element represents respondent payoffs.}
    \label{fig:game_tree}
\end{figure}

A proposer strategy is a $T$-tuple $(\sigma_1,...,\sigma_T)$ where $\sigma_t$ is the probability of offering $a_P$ at time $t$. Implicitly, I assume that the proposer does not condition her action in period $t$ on their previous offers. That is, a proposer strategy is a Markov strategy.\footnote{By Markov strategy, I mean that the strategy does not depend on the payoff irrelevant history, but can depend on calendar time. Calendar time is payoff relevant since the game is finite horizon.} This is without loss of generality in the binary action case, as I'll argue below in Section \ref{S:analysis}.

Suppose throughout that the respondent strictly prefers $a_R$ to $a_P$, that is $u_R^R>u_P^R$. Since it is weakly dominant for the respondent to accept an offer to play his preferred action, assume throughout that he accepts an offer to play $a_R$. Thus, a respondent strategy is a value $p\in[0,1]$ representing the probability of accepting an offer $a_P$. Observe that I define strategies as behavioral strategies, rather than mixed strategies. These notions are not equivalent in games of imperfect recall \citep{isbell1957}. There are numerous issues when considering mixed strategies in settings with imperfect recall \citep{pr}, so I focus on behavioral strategies for the purpose of this analysis. 

I assume throughout that the proposer and the respondent are \textit{misaligned}. That is, they each prefer a different course of action. Hence, the proposer strictly prefers $a_P$ to $a_R$. If the proposer and respondent had aligned preferences, the only reasonable outcome is instant agreement on the mutually preferred outcome.

\begin{asn}[Misalignment]
    The proposer and respondent are misaligned if $u_P^P>u_R^P$ and $u_R^R>u_P^R$.
\end{asn}

Let $\alpha_t^P$ denote the respondent's belief that he is in period $t$, given he receives the offer $a_P$. Similarly, let $\alpha_t^R$ denote the respondent's belief that he is in period $t$, given he receives the offer $a_R$. The $T$-tuples of conditional beliefs over calendar time $\alpha^P=(\alpha_1^P,...,\alpha_T^P)$ and  $\alpha^R=(\alpha_1^R,...,\alpha_T^R)$ are consistent with $(\sigma_1,...,\sigma_T)$ and $p$ if
\begin{align*}
  \begin{aligned}
    \alpha_t^P &= \frac{\sigma_t\gamma_{t}}{\sum_{k=1}^T\sigma_k\gamma_k}
  \end{aligned}
  \qquad
  \begin{aligned}
    \alpha_t^R &= \frac{(1-\sigma_t)\gamma_t}{\sum_{k=1}^T(1-\sigma_k)\gamma_k}
  \end{aligned}
\end{align*}
where
\begin{align*}
  \begin{aligned}
    \gamma_1 &= \frac{1}{1+\sum_{k=1}^{T-1}(1-p)^{k}\prod_{\ell=1}^k\sigma_\ell}
  \end{aligned}
  \qquad
  \begin{aligned}
    \gamma_t &= \frac{(1-p)^{t-1}\prod_{\ell=1}^{t-1}\sigma_\ell}{1+\sum_{k=1}^{T-1}(1-p)^{k}\prod_{\ell=1}^k\sigma_\ell} \quad \forall t > 1.
  \end{aligned}
\end{align*}
Here, I interpret $\gamma_t$ as the probability that the respondent is in period $t$ \textit{unconditional} on the offer received. The respondent's unconditional belief that he is in period $t$ can be interpreted as the \textit{long-run frequency} with which he enters period $t$, given the strategy profiles $(\sigma_1,...,\sigma_T)$ and $p$.\footnote{See \cite{ahp} or \cite{lambertmarpleshoham2019} for a discussion of these unconditional beliefs.}

Given any proposer strategy profile $\sigma=(\sigma_1,...,\sigma_T)$ and respondent strategy $p$, the proposer's value at time $t$ is
\begin{align*}
    U_t^P(\sigma,p)=\sigma_tpu_P^P+(1-\sigma_t)u_R^P+\delta\sigma_t(1-p)U_{t+1}^P(\sigma,p)
\end{align*}
where I define the final continuation value $U_{T+1}^P(\sigma,p)=0$. That is, $U_t^P(\sigma,p)$ is the proposer's expected utility under strategy profile $(\sigma,p)$ at period $t$. The first term represents the returns from the respondent accepting $a_P$, weighted by the probability of acceptance. The second term, similarly, is the payoff from offering $a_R$. The final term is the continuation value, weighted by the probability an offer is rejected in period $t$ and discounted by $\delta$. Similarly, the respondent's value function at time $t$ is
\begin{align*}
    U_t^R(\sigma,p)=\sigma_tpu_P^R+(1-\sigma_t)u_R^R+\delta \sigma_t (1-p)U_{t+1}^R(\sigma,p)
\end{align*}
where I define $U_{T+1}^R(\sigma,p)=0$. With this notation, I define my main solution concept.

\begin{defn}[MBE]
    A proposer strategy $\sigma^*$, a respondent strategy $p^*$, and beliefs $\alpha^P=(\alpha_1^P,...,\alpha_T^P)$ and $\alpha^R=(\alpha_1^R,...,\alpha_T^R)$ constitute a Multiselves Bargaining Equilibrium (MBE) if
    \begin{enumerate}[(i)]
        \item $\sigma^*$ is a best-response to $p^*$. That is, for each $t$,
        \begin{align*}
            \sigma_t^*\in\arg\max_{s_t\in[0,1]}s_tp^*u_P^P+(1-s_t)u_R^P+\delta s_t(1-p^*)U_{t+1}^P(\sigma^*,p^*)
        \end{align*}
        \item $\alpha^P$ and $\alpha^R$ are consistent with $\sigma^*$ and $p^*$.
        \item $p^*$ satisfies
        \begin{align*}
            p^*\in\arg\max_{p\in[0,1]}\sum_{t=1}^T\alpha_t^P\left(p\delta^{t-1}u_P^R+(1-p)\delta^tU_{t+1}^R(\sigma^*,p^*)\right)
        \end{align*}
        \item If $\exists t$ such that $\sigma_t^*>0$, then $p^*>0$.
    \end{enumerate}
\end{defn}

Condition (i) in the definition of MBE simply requires the proposer best respond to the respondent's strategy. Condition (ii) requires beliefs to be correct on the equilibrium path. Condition (iii) is more nuanced, and relies heavily on the interpretation of optimal strategies in the absentminded driver problem in \cite{ahp}. It requires that the respondent best responds after the proposer's preferred offer, holding fixed what they would do in a continuation of the game. \cite{ahp} argue that an absentminded decision maker cannot \textit{simultaneously} choose his action at each exit (absentminded players will forget that they have made a deviation). This manifests itself in condition (iii) since the continuation payoff $U_{t+1}^R$ depends on $p^*$, which the respondent takes as given. Condition (iv) is weak, and requires that the proposer never makes an offer that is rejected for sure. This condition is redundant whenever $\delta<1$,\footnote{That is, conditions (i)-(iii) imply (iv).} and eliminates an uninteresting class of behaviors when $\delta=1$.\footnote{In particular, it eliminates strategy profiles of the following form: $a_P$ is offered in some period $t<T$, rejected for sure, and $a_R$ is eventually accepted.}

\section{Analysis}\label{S:analysis}

One equilibrium is immediately obvious: $\sigma_t^*=1$ for all $t$ and $p^*=1$. In this equilibrium, there is no delay --- the proposer always makes her preferred offer and the respondent always accepts. There is no profitable deviation from any party: from the perspective of the respondent, rejecting an offer of $a_P$ leads to a payoff of $\delta u_P^R$, since $\alpha_t^P=1$ and $\sigma_2^*=1$. Accepting an offer of $a_P$ leads to a payoff $u_P^R$. This equilibrium has a natural analogue in the case of perfect recall. The unique subgame-perfect equilibrium of the version of this game with perfect recall is offering $a_P$ in each period and the respondent accepting immediately. 

There is another MBE which does \textit{not} have an analogue in the perfect recall version of this game (assuming $\delta\geq u_P^R/u_R^R)$: $\sigma_t^*=0$ for all $t$ and $p^*=0$.\footnote{In fact, as I demonstrate in Theorem \ref{T:char}, this is an equilibrium so long as $p^*\leq\frac{u_R^P(1-\delta)}{u_P^P-\delta u_R^P}$.} To support this as an equilibrium, set $\alpha_1^P=1$. After receiving an (off-path) offer $a_P$, the respondent forms a degenerate belief on $t=1$. Thus, rejecting the offer $a_P$ gives a payoff of $\delta u_R^R\geq u_P^R$. So the respondent optimally chooses $p^*=0$, and the proposer always offers $a_R$. In this equilibrium, there is also no delay: the proposer always makes the respondent's preferred offer and the respondent always accepts. Unlike the previous pure strategy MBE, this equilibrium has no natural analogue when parties have perfect recall. With perfect recall, the respondent is willing to accept any offer in period $T$, so the proposer should offer $a_P$ in each period by backward induction. With imperfect recall, the respondent may rationally believe that the proposers preferred offer is made in $t=1$.

In bargaining games with perfect recall and perfect information, the source of bargaining power is the \textit{recognition process}, or the process by which offers are made. When one party (the proposer) makes all the offers, the other has no bargaining power. Since the respondent never makes an offer, he will accept anything, and the proposer will exclusively offer $a_P$. With imperfect recall, however, the game can no longer be solved by backward induction. The respondent has bargaining power, since he can credibly reject an offer of $a_P$. The respondent is \textit{rationally confused} about the trade deadline, and thus optimally rejects $a_P$.

Finally, my main equilibrium of interest is a mixing equilibrium. I'll construct this equilibrium as follows: suppose $\sigma_T^*\in(0,1)$. Then, since the proposer must be indifferent between offering $a_P$ and $a_R$ at time $T$, $p^*u_P^P=u_R^P\implies p^*=u_R^P/u_P^P$. Next, examine the proposer's program at time $T-1$. This program is linear in $s_{T-1}$ with slope
\begin{align}\label{E:greedearly}
    p^*u_P^P-u_R^P+\delta(1-p^*)U_T^P(\sigma^*,p^*)=\delta(1-p^*)U_T^P(\sigma^*,p^*)>0,
\end{align}
so $\sigma_{T-1}^*=1$. Working backwards, one can show that for all $t<T$, $\sigma_t^*=1$.

Next, the respondent must be indifferent between accepting and rejecting the offer $a_P$. Observe that the respondent's program is linear in $p$ with slope
\begin{align*}
    \sum_{t=1}^T\alpha_t^P\left(\delta^{t-1}u_P^R-\delta^tU_{t+1}^R(\sigma^*,p^*)\right).
\end{align*}
In order for there to be an equilibrium with $p^*=u_R^P/u_P^P$, there must exist some $\sigma_T\in(0,1)$ which solves
\begin{align}\label{E:char}
    \sum_{t=1}^T\alpha_t^P(\sigma_T)\left(\delta^{t-1}u_P^R-\delta^tU_{t+1}^R(\sigma_1^*,...,\sigma_{T-1}^*,\sigma_T,p^*)\right)=0
\end{align}
where $\alpha_t^P(\sigma_T)$ is the unique conditional belief that the time period is $t$ consistent with $(\sigma_1^*,...,\sigma_{T-1}^*,\sigma_T)$ and $p^*$. In the proof of Theorem \ref{T:char}, I show that Equation \eqref{E:char} has a unique solution $\sigma_T^*\in(0,1)$ whenever $\delta$ exceeds some threshold $\underline{\delta}(T)\in(0,1)$ which depends on $T$.

\begin{thm}[Equilibrium Characterization]\label{T:char}
    There exists a threshold $u_P^R/u_R^R\leq\underline{\delta}(T)<1$ such that, if $T\geq2 $ and $\delta\geq\underline{\delta}(T)$, the following cases exhaust all possibilities of strategy profiles $(\sigma^*,p^*)$ that can occur in MBE:
    \begin{enumerate}
        \item \textit{(Proposer Preferred Equilibrium):} $\sigma_t^*=1$ for all $t$ and $p^*=1$.
        \item \textit{(Respondent Preferred Equilibrium):} $\sigma_t^*=0$ for all $t$ and $p^*\leq\frac{u_R^P(1-\delta)}{u_P^P-\delta u_R^P}$.
        \item \textit{(Mixing Equilibrium):} $\sigma_t^*=1$ for all $t<T$, $p^*=u_R^P/u_P^P$, and $\sigma_T^*$ uniquely solves Equation \eqref{E:char}.
    \end{enumerate}
\end{thm}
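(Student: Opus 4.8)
The plan is to exploit the fact that both optimization problems defining an AHPE are affine in the relevant choice variable, so that each best response is governed by the sign of a slope. For the proposer, condition (i) makes $\sigma_t^*$ determined by the sign of $\Phi_t := \tfrac{3V}{4}p_G^* - \tfrac{V}{2} + \delta(1-p_G^*)U_{t+1}^P(\sigma^*,p_G^*)$, and for the respondent, condition (iii) makes $p_G^*$ determined by the sign of $\Psi := \sum_{t=1}^T \alpha_t^G \delta^{t-1}\big(\tfrac{V}{4} - \delta U_{t+1}^R(\sigma^*,p_G^*)\big)$. I would organize the whole argument as a case analysis on $p_G^*$, reading off $\sigma^*$ from the proposer's slopes and then checking consistency against the respondent's slope.

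First I would pin down the proposer's behavior. Because a fair offer always secures $V/2$, one has $U_t^P \geq V/2 > 0$ for $t \leq T$ and $U_{T+1}^P = 0$; hence $\Phi_T = \tfrac{3V}{4}p_G^* - \tfrac{V}{2}$ has the sign of $p_G^* - \tfrac23$, while $\Phi_t = \Phi_T + \delta(1-p_G^*)U_{t+1}^P \geq \Phi_T$ for $t<T$, with equality only if $p_G^* = 1$. A short backward induction then shows the proposer can only ``switch at the deadline'': if $p_G^* > \tfrac23$ then $\sigma_t^*=1$ for all $t$; if $p_G^* = \tfrac23$ then $\sigma_t^*=1$ for $t<T$ with $\sigma_T^*$ free; and if $p_G^* < \tfrac23$ then $\sigma_T^*=0$, so $U_T^P = V/2$ and $\Phi_{T-1}$ has the sign of $p_G^* - \bar p$ with $\bar p := (1-\delta)(3/2-\delta)^{-1} < \tfrac23$, which iterates to give $\sigma^*\equiv 0$ when $p_G^* < \bar p$.

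Next I would use the respondent's slope to discard every profile that does not appear in the statement. For the all-greedy profile, closed forms give $U_{t+1}^R = p_G^*\tfrac{V}{4}\sum_{j=0}^{T-t-1}(\delta(1-p_G^*))^j$, and the bound $\delta p_G^* \sum_{j}(\delta(1-p_G^*))^j < 1$ (valid whenever $\delta<1$ or $p_G^*<1$) makes every bracket of $\Psi$ strictly positive; thus $\Psi>0$, forcing $p_G^*=1$. This simultaneously produces the Greedy Equilibrium and rules out $p_G^* \in (\tfrac23,1)$. For the profile $\sigma_t^*=1\ (t<T),\ \sigma_T^*=0$, rejecting in period $T-1$ fetches the fair offer, so the period-$(T-1)$ bracket equals $\tfrac{V}{4}(1-2\delta)<0$; I would show that once $\delta$ is large the entire sum $\Psi$ is negative, forcing $p_G^*=0$, which contradicts $p_G^*\in(\bar p,\tfrac23]$ and also kills the knife-edge profiles at $p_G^*=\bar p$ with some $\sigma_{t}>0$. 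The only surviving profile with $p_G^*<\tfrac23$ is $\sigma^*\equiv0$; there the greedy node is off path, consistency (ii) leaves $\alpha^G$ unrestricted, and concentrating enough belief on early periods makes $\Psi\le0$, so every $p_G^*\le\bar p$ is a best response --- exactly the Fair Equilibrium.

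The remaining and hardest case is the Mixing Equilibrium, $p_G^*=\tfrac23$ with $\sigma_t^*=1$ for $t<T$, where the respondent must be exactly indifferent, i.e. $\sigma_T^*$ must solve Equation \eqref{E:char}. I would substitute the consistent beliefs $\alpha_t^G(\sigma_T)$ --- which share the common denominator $\sum_{t<T}(1-p_G^*)^{t-1} + \sigma_T(1-p_G^*)^{T-1}$, linear in $\sigma_T$ --- together with the continuation values $U_{t+1}^R$, which are affine in $\sigma_T$ because only $U_T^R$ carries $\sigma_T$. This should turn \eqref{E:char} into $\psi_T(\sigma_T)=0$ with $\psi_T$ of M\"obius form, hence strictly monotone with a single root. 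Evaluating the endpoints, $\psi_T(1)>0$ (the all-greedy computation above), while $\psi_T(0)$ is a polynomial in $\delta$ whose sign changes at a cutoff; defining $\underline{\delta}(T)$ as that cutoff, the unique root lies in $(0,1)$ precisely when $\delta\ge\underline{\delta}(T)$. I expect the main obstacle to be exactly this last step: carrying out the general-$T$ bookkeeping for $\alpha_t^G(\sigma_T)$ and $U_{t+1}^R$, proving the single-crossing structure uniformly in $T$, and then locating the threshold and verifying $\underline{\delta}(T)<1$ with $\underline{\delta}(T)$ increasing in $T$. The sign-chasing of the earlier steps is routine once the slope formulas are in hand; the genuine work lies in the analysis of \eqref{E:char}.
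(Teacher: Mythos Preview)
Your plan is sound and reaches the same three equilibria, but the route differs from the paper's in two respects worth noting. First, the paper organizes the case analysis by the value of $\sigma_T^*$ (the three cases $\sigma_T^*=1$, $\sigma_T^*=0$, $\sigma_T^*\in(0,1)$) rather than by $p_G^*$; your decomposition by $p_G^*$ relative to $2/3$ and $\bar p$ is arguably tighter, because it forces you to confront the ``interstitial'' profile $\sigma_t^*=1$ for $t<T$, $\sigma_T^*=0$ with $p_G^*\in(\bar p,2/3]$ explicitly, whereas the paper's Case~II passes from $p_G^*\le 2/3$ directly to $\sigma_{T-1}^*=0$, an implication that as written actually needs the stronger hypothesis $p_G^*\le\bar p$. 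Second, for the mixing case the paper establishes existence by invoking the ex-ante equivalence (its Lemma~\ref{L:equiv}) to recast \eqref{E:char} as the first-order condition of the planning problem, checks concavity at $\delta=1$, and appeals to continuity; your direct intermediate-value argument on $\psi_T(0)$ and $\psi_T(1)$ is more elementary and avoids that detour. For uniqueness the paper computes $L'(\sigma_T)>0$ at any zero, while your M\"obius observation---that clearing the common affine denominator of the $\alpha_t^G(\sigma_T)$ leaves a numerator \emph{linear} in $\sigma_T$, since each $U_{t+1}^R$ is affine in $\sigma_T$---yields at most one root with no calculus. The residual work, as you correctly flag, is the endpoint $\psi_T(0)$: showing it is negative for $\delta$ near $1$ (and that the same computation disposes of the interstitial profile uniformly in $p_G^*\in(\bar p,2/3]$) is what actually produces $\underline{\delta}(T)<1$.
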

Theorem \ref{T:char} characterizes behavior that can occur in equilibrium. Appendix \ref{A:sequential} verifies that all MBE described in Theorem \ref{T:char} satisfy a natural \textit{sequential equilibrium} refinement.\footnote{See multiself sequential equilibria in \cite{lambertmarpleshoham2019}.}

The respondent has some bargaining power in the mixing equilibrium. Since the respondent is unaware of the calendar time, he can rationally reject his least preferred offer with some probability even if he is at the trade deadline. The respondent's randomization disciplines the proposer to make favorable offers to the respondent in the final period of the game with some probability. %

In the mixing equilibrium, there is potentially substantial delay. The respondent strategy $p^*=u_R^P/u_P^P$ also does not depend on $T$ or $\delta$ (so long as $\delta\geq\underline{\delta}(T)$), so delay persists even as $T\to\infty$ and $\delta\to1$. The following corollary characterizes several immediate facts regarding the probability of delay, the probability a deal falls through, and the distribution over trade dates.

\begin{cor}\label{C:delay}
    If $T\geq 2$ and $\delta\geq\underline{\delta}(T)$, let $\hat{T}$ denote the date at which a deal occurs. Let $\hat{T}=\emptyset$ denote the event in which there is no trade. In the mixing MBE,
    \begin{enumerate}[(i)]
        \item $Pr(\hat{T}>1)=1-p^*=\frac{u_P^P-u_R^P}{u_P^P}$.
        \item $Pr(\hat{T}=t)=p^*(1-p^*)^{t-1}=\frac{u_R^P(u_P^P-u_R^P)^{t-1}}{(u_P^P)^t}$ for all $t<T$.
        \item $Pr(\hat{T}=T)=(1-p^*)^{T-1}(\sigma_T^*p^*+(1-\sigma_T^*))$.
        \item $Pr(\hat{T}=\emptyset)=\sigma_T^*(1-p^*)^T$.
        \item The expected date at which a deal occurs, given that a deal is made, is
        \begin{align*}
            E[\hat{T}|\hat{T}\neq\emptyset]&=\left(\frac{p^*}{1-\sigma_T^*(1-p^*)^T}\right)\sum_{t=1}^T(1-p^*)^{t-1}t\\
        \end{align*}
    \end{enumerate}
\end{cor}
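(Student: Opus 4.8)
The plan is to translate the mixing strategy profile of Theorem~\ref{T:char} into the distribution it induces over the trade date $\hat{T}$, and then read off each item by direct computation. The key structural fact, inherited from Theorem~\ref{T:char}, is that in the mixing AHPE the proposer offers $G$ with certainty in every period $t<T$ and the respondent accepts a greedy offer with probability $p_G^*=2/3$; only in the terminal period does the proposer randomize, sending $G$ with probability $\sigma_T^*$ and $F$ with probability $1-\sigma_T^*$, while $F$ is always accepted ($p_F=1$). Because play in periods $1,\dots,T-1$ is a sequence of independent accept/reject trials with common acceptance probability $p_G^*$, the event that play even reaches period $t$ requires $t-1$ consecutive rejections and thus has probability $(1-p_G^*)^{t-1}$. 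I would open the proof by recording this geometric reaching probability, since every subsequent part builds on it.

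First I would dispatch (i) and (ii) together. Conditional on reaching a period $t<T$, a deal is struck precisely when the respondent accepts the greedy offer, so $\Pr(\hat{T}=t)=(1-p_G^*)^{t-1}p_G^*$; substituting $p_G^*=2/3$ gives $2/3^t$, which is (ii), and the complementary event at $t=1$ gives $\Pr(\hat{T}>1)=1-p_G^*=1/3$, which is (i). These are immediate once the geometric structure is in hand.

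The substantive step is the terminal-period accounting for (iii) and (iv). Arrival at period $T$ has probability $(1-p_G^*)^{T-1}$. Given arrival, a deal forms in one of two disjoint ways: the proposer offers $G$ (probability $\sigma_T^*$) and it is accepted (probability $p_G^*$), or the proposer offers $F$ (probability $1-\sigma_T^*$), which is accepted with certainty; summing the disjoint contributions yields the bracketed factor $\sigma_T^* p_G^*+(1-\sigma_T^*)$ of (iii). The deal fails only along the single branch in which the proposer offers $G$ at $T$ and the respondent then rejects, whence $\Pr(\hat{T}=\emptyset)=(1-p_G^*)^{T-1}\sigma_T^*(1-p_G^*)=\sigma_T^*(1-p_G^*)^T$, giving (iv). I would then verify that the masses in (ii)--(iv) sum to one, using the finite geometric sum $\sum_{t=1}^{T-1}(1-p_G^*)^{t-1}p_G^*=1-(1-p_G^*)^{T-1}$ together with the algebraic identity $\sigma_T^* p_G^*+(1-\sigma_T^*)+\sigma_T^*(1-p_G^*)=1$; this both validates the distribution and isolates the survival probability $\Pr(\hat{T}\neq\emptyset)=1-\sigma_T^*(1-p_G^*)^T$ needed for the last part.

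Finally, for (v) I would apply the definition of conditional expectation, $E[\hat{T}\mid \hat{T}\neq\emptyset]=\big(\sum_{t=1}^{T}t\,\Pr(\hat{T}=t)\big)/\Pr(\hat{T}\neq\emptyset)$, substitute the masses from (ii) and (iii), and simplify, combining the terminal $t=T$ contribution with the geometric body $p_G^*\sum_{t=1}^{T-1}t(1-p_G^*)^{t-1}$ and dividing by $1-\sigma_T^*(1-p_G^*)^T$. I expect the main obstacle to be purely the deadline bookkeeping rather than any analytic difficulty: one must take care not to double-count the terminal deal (greedy-accepted and fair-accepted are disjoint branches) and must recognize that no-trade is reachable \emph{only} through a rejected greedy offer at $T$, since any fair offer ends the game. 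Once the terminal-period masses are pinned down precisely, every claim reduces to the finite geometric series and routine algebra.
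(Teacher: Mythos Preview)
Your proposal is correct and mirrors the paper's own proof: both invoke the mixing-equilibrium structure from Theorem~\ref{T:char} ($\sigma_t^*=1$ for $t<T$, $p_G^*=2/3$, mixing only at $T$), compute the reaching probabilities $(1-p_G^*)^{t-1}$ and the terminal-period masses by direct case analysis, and obtain (v) from the definition of conditional expectation. The only cosmetic addition is your explicit check that the masses sum to one, which the paper omits.
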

Observe that the probability a deal is made in period $t<T$ does not depend on the discount rate $\delta$ or the time horizon $T$. The probability the deal falls through, $Pr(\hat{T}=\emptyset)$, and the expected date of agreement depend on $T$ and $\delta$, however. The next result characterizes the manner in which these values depend on $\delta$ and $T$.
\begin{cor}\label{C:compstats}
If $T\geq 2$ and $\delta\geq \underline{\delta}(T)$, in the mixing MBE,
\begin{enumerate}[(i)]

    \item $Pr(\hat{T}=\emptyset)$ is increasing in $\delta$ and $E[\hat{T}|\hat{T}\neq\emptyset]$ is increasing in $\delta$.
    \item As $T\to\infty$, $Pr(\hat{T}=\emptyset)\to0$ and
    \begin{align*}
        E[\hat{T}|\hat{T}\neq\emptyset]\to\frac{1}{p^*}=\frac{u_P^P}{u_R^P}>1
    \end{align*}
\end{enumerate}
\end{cor}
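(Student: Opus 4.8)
The plan is to reduce both claims in part (i) to a single monotone comparative static---that the final-period mixing probability $\sigma_T^*$ is strictly increasing in $\delta$---and then to dispatch part (ii) by direct limits. First observe that in the mixing AHPE the respondent's acceptance probability is pinned at $p_G^*=2/3$ independently of $\delta$ and $T$, so that $1-p_G^*=1/3$ is constant. Hence (using the expressions established in Corollary \ref{C:delay}) $Pr(\hat T=\emptyset)=\sigma_T^*(1-p_G^*)^T=\sigma_T^*/3^T$ depends on $\delta$ only through $\sigma_T^*$, and in
\[
E[\hat T\mid \hat T\neq\emptyset]=\frac{p_G^*}{1-\sigma_T^*(1-p_G^*)^T}\sum_{t=1}^T(1-p_G^*)^{t-1}t
\]
the sum carries no $\delta$-dependence, while the only $\delta$-sensitive object is again $\sigma_T^*$, entering the denominator as $1-\sigma_T^*/3^T\in(0,1)$. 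Since $Pr(\hat T=\emptyset)$ is increasing in $\sigma_T^*$, and a larger $\sigma_T^*$ shrinks the denominator and thus raises the conditional expectation, both statements in (i) follow once I show $d\sigma_T^*/d\delta>0$.

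To establish that monotonicity I would differentiate the indifference condition \eqref{E:char} implicitly. Writing its left-hand side as $F(\sigma_T,\delta)$, the proof of Theorem \ref{T:char} already yields that $F$ is affine in $\sigma_T$ with a strictly positive slope $\partial F/\partial\sigma_T>0$ (this is exactly what delivers the unique interior root). A key simplification is that the consistent beliefs $\alpha_t^G$ do not involve $\delta$ at all---they are determined by $\sigma_T$, $p_G^*=2/3$, and the unconditional frequencies $\gamma_t$---so every dependence of $F$ on $\delta$ is explicit, through the discount factors $\delta^{t-1},\delta^t$ and through the continuation values $U_{t+1}^R(\sigma^*,p_G^*)$. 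By the implicit function theorem, the sign of $d\sigma_T^*/d\delta$ equals the sign of $-\partial F/\partial\delta$, so it suffices to show $\partial F/\partial\delta<0$ at the root. The main obstacle is precisely signing this derivative: raising $\delta$ makes the respondent's rejection payoff $\delta U_{t+1}^R$ more attractive in the early periods $t<T$, pushing $F$ down, but it also inflates the discounted final-period acceptance term, pushing $F$ up, so the inequality is not term-by-term. I would resolve this by solving the affine equation explicitly for $\sigma_T^*(\delta)$---the recursion $U_t^R=\tfrac{V}{6}+\tfrac{\delta}{3}U_{t+1}^R$ for $t<T$ with terminal value $U_T^R=\tfrac{V}{2}-\sigma_T\tfrac{V}{3}$ renders every object a closed-form function of $r:=\delta/3$---and then differentiating, verifying that the positive final-period contribution dominates for all admissible $r$. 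The case $T=2$ is a useful sanity check: there \eqref{E:char} collapses to $\sigma_T^*=3(2\delta-1)/(5\delta)$, which is manifestly increasing, with $d\sigma_T^*/d\delta=3/(5\delta^2)>0$, and vanishes exactly at $\underline{\delta}(2)=1/2$.

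For part (ii) the arguments are direct limits, with the only input being that $\sigma_T^*\in(0,1)$ is a probability for every $T$. This gives $0<Pr(\hat T=\emptyset)=\sigma_T^*/3^T\le 3^{-T}\to0$, which holds irrespective of how $\delta\ge\underline{\delta}(T)$ is selected as $T$ grows, establishing the first limit. For the conditional expectation, the same bound forces $\sigma_T^*(1-p_G^*)^T\to0$, so the denominator converges to $1$, while the partial sums converge to the geometric series $\sum_{t=1}^\infty t\,(1-p_G^*)^{t-1}=(p_G^*)^{-2}$. Therefore $E[\hat T\mid\hat T\neq\emptyset]\to p_G^*\cdot(p_G^*)^{-2}=1/p_G^*=3/2$, as claimed.
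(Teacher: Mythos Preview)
Your overall architecture matches the paper's proof exactly: reduce (i) to the single claim that $\sigma_T^*$ is strictly increasing in $\delta$, argue via the implicit function theorem applied to \eqref{E:char}, and dispose of (ii) by direct limits using only $\sigma_T^*\in(0,1)$. Your reduction of the conditional-expectation claim is in fact slightly more direct than the paper's, which detours through first-order stochastic dominance of the conditional trade-date distribution; your observation that $\delta$ enters only through $\sigma_T^*$ in the denominator $1-\sigma_T^*(1-p_G^*)^T$ suffices.

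The genuine gap is exactly at the step you flag as ``the main obstacle'': signing $\partial F/\partial\delta$ at the root. You correctly identify the opposing effects, but then merely \emph{propose} to solve the recursion in closed form and differentiate, without carrying it out; for general $T$ that computation is nontrivial, and your concluding phrase about which contribution ``dominates'' appears to have the sign reversed relative to what you need (you want the early-period \emph{negative} contribution to dominate so that $\partial F/\partial\delta<0$). The paper avoids the brute-force calculation with a short two-step bound: writing $L(\sigma_T,\delta)$ for the left side of \eqref{E:char}, it first drops the term $-\delta^t\,\partial U_{t+1}^R/\partial\delta\le 0$ (continuation value is nondecreasing in $\delta$), and then compares the remainder with $L(\sigma_T^*,\delta)=0$ itself to conclude $L_2<0$. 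Adopting that comparison-to-$L$ device would close your gap without any explicit solution of the recursion.
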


As the players become more patient, the probability with which a deal falls through increases. Holding $\sigma$ fixed, increasing $\delta$ gives the respondent a strict incentive to reject an offer of $a_P$. To keep the respondent indifferent between accepting and rejecting $a_P$, $\sigma_T^*$ must increase as $\delta$ increases. But since the probability of no-deal is $Pr(\hat{T}=\emptyset)=\sigma_T^*(1-p^*)^T$ and $\sigma_T^*$ is increasing in $\delta$, $Pr(\hat{T}=\emptyset)$ must also be increasing in $\delta$. Moreover, conditional on reaching a deal, more patient players delay more. As the exogenous trade deadline $T\to\infty$, the probability of no-deal goes to $0$ as the respondent is likely to accept an offer prior to the distant deadline. The expected trade date as $T\to\infty$ is, as one might expect, finite.

Theorem \ref{T:char} can also be used to see why the restriction to Markov strategies for the proposer is without loss of generality without side payments. Suppose I do not restrict to Markov strategies. Regardless of the history of the game, in the final period the proposer either offers $a_P$, $a_R$, or they mix. If they send their preferred offer in period $T$ on the equilibrium path, it is a strict best response to send their preferred offer in any earlier period $t<T$. Similarly, if they send the respondent's preferred offer in the final period on the equilibrium path, it is a strict best response to send the respondent's preferred offer in any earlier period (provided that $p^*$ satisfies the bound in Theorem \ref{T:char}). Finally, if the proposer mixes in the final period on path, then it is a strict best response to offer $a_P$ in every period $t<T$. Therefore, Theorem \ref{T:char} describes the set of MBE outcomes even when one allows the proposer to play non-Markov strategies. This reasoning fails in a model with side payments are allowed, as I will demonstrate in Section \ref{S:general}.

\subsection{Ex-Ante Equilibrium}

I define an alternate solution concept which is analogous to the concept of \textit{planning optimality} in \cite{ahp}. First, note that a strategy profile $\sigma$ for the proposer induces a distribution over pure strategy vectors denoted $\rho_\sigma\in\Delta(\{a_P,a_R\}^T)$.

\begin{defn}[Ex-Ante Equilibrium]
    A proposer strategy $\sigma^*$ and a respondent strategy $p^*$ is an \textit{ex-ante equilibrium} if
    \begin{enumerate}[(i)]
        \item $\sigma^*$ is a best response to $p^*$. That is, for each $t$,
        \begin{align*}
            \sigma_t^*\in\arg\max_{s_t\in[0,1]}s_tp^*u_P^P+(1-s_t)u_R^P+\delta s_t(1-p^*)U_{t+1}^P(\sigma^*,p^*)
        \end{align*}
        \item $p^*$ solves
        \begin{align}\label{E:exante}
            \max_p\sum_{\boldsymbol{a}\in\{a_P,a_R\}^T}\rho_{\sigma^*}(\boldsymbol{a})\sum_{t=1}^T\delta^{t-1}(1-p)^{t-1}\left(\mathbb{1}_{a_t=a_P}pu_P^R+\mathbb{1}_{a_t=a_R}u_R^R\right)\mathbb{1}_{a_i=a_P\;\forall i<t}
        \end{align}
    \end{enumerate}
\end{defn}

Equation \eqref{E:exante} has the following interpretation: each action profile $\boldsymbol{a}\in\supp\sigma^*$ induces an absentminded driver decision problem, with payoffs determined by whether or not the proposer offers $a_P$ or $a_R$ at each node. The respondent ex-ante commits to a probability of accepting $a_P$.

Next, observe that, if $\sigma_T^*>0$, then $\sigma_{T-1}^*=1$ by Equation \eqref{E:greedearly}. Applying this considerably simplifies the program in Equation \eqref{E:exante}:
\begin{align*}
    \max_p\sigma_T^*&\left(\sum_{t=1}^{T-1}\delta^{t-1}p(1-p)^{t-1}u_P^R+\delta^{T-1}(1-p)^{T-1}pu_P^R\right)\\&+(1-\sigma_T^*)\left(\sum_{t=1}^{T-1}\delta^{t-1}p(1-p)^{t-1}u_P^R+\delta^{T-1}(1-p)^{T-1}u_R^R\right).
\end{align*}
Essentially, the respondent faces an absentminded driver problem, but their decision problem has a random element. In particular, with probability $\sigma_T^*$, the payoffs at the final exit are $\delta^Tu_P^R$ for the respondent. With probability $1-\sigma_T^*$, the payoffs from the final exit are $\delta^Tu_R^R$, but the respondent can \textit{identify} the final exit (since it is the only exit where she receives her preferred offer in equilibrium).\footnote{In the absentminded driver problem, imagine the final exit has a sign indicating there are no further exits down the road.} Since the driver does not know which decision problem she faces, she selects an ex-ante optimal (or planning-optimal in the language of \cite{ahp}) exit probability $p^*$ to commit to. Figure \ref{fig:absent-minded} displays the problem in Equation \ref{E:exante} when $T=3$.

\begin{figure}[htbp]
    \centering

    \begin{minipage}{0.45\textwidth}
    \centering
    \begin{tikzpicture}[>=stealth, thick, scale=1.2, every node/.style={scale=1}]
    \node at (0,4.2) {START};

    \node (t1) at (0,3.5) {}; \draw[fill=black] (t1) circle (2pt); \node[left, xshift=-3pt] at (t1) {$t=1$};
    \node (t2) at (0,2.5) {}; \draw[fill=black] (t2) circle (2pt); \node[left, xshift=-3pt] at (t2) {$t=2$};
    \node (t3) at (0,1.5) {}; \draw[fill=black] (t3) circle (2pt); \node[left, xshift=-3pt] at (t3) {$t=3$};

    \draw[-] (0,4) -- (t1); %
    \draw[-] (t1) -- (t2);  %
    \draw[-] (t2) -- (t3);  %
    \draw[-] (t3) -- (0,0.5); %

    \draw (t1) -- (2,3.5); \node[above] at (1,3.5) {$A_P$}; \node at (2.3,3.5) {$u_P^R$};
    \draw (t2) -- (2,2.5); \node[above] at (1,2.5) {$A_P$}; \node at (2.4,2.5) {$\delta u_P^R$};
    \draw (t3) -- (2,1.5); \node[above] at (1,1.5) {$A_P$}; \node at (2.3,1.5) {$\delta^2 u_P^R$};

    \node at (0,0) {$0$};

    \node[left] at (0,3) {R};
    \node[left] at (0,2) {R};
    \node[left] at (0,0.85) {R};

    \node[draw, dashed, fit=(t1)(t2)(t3), inner sep=10pt] {};
    \end{tikzpicture}
    
    \vspace{0.3em}
    \small{$a_3 = a_P$}
    \end{minipage}
    \hspace{1cm}
    \begin{minipage}{0.45\textwidth}
    \centering
    \begin{tikzpicture}[>=stealth, thick, scale=1.2, every node/.style={scale=1}]
    \node at (0,4.2) {START};

    \node (t1) at (0,3.5) {}; \draw[fill=black] (t1) circle (2pt); \node[left, xshift=-3pt] at (t1) {$t=1$};
    \node (t2) at (0,2.5) {}; \draw[fill=black] (t2) circle (2pt); \node[left, xshift=-3pt] at (t2) {$t=2$};
    \node (t3) at (0,1.5) {}; \draw[fill=black] (t3) circle (2pt); \node[left, xshift=-3pt] at (t3) {$t=3$};

    \draw[-] (0,4) -- (t1); %
    \draw[-] (t1) -- (t2);  %
    \draw[-] (t2) -- (t3);  %
    \draw[-] (t3) -- (0,0.5); %

    \draw (t1) -- (2,3.5); \node[above] at (1,3.5) {$A_P$}; \node at (2.3,3.5) {$u_P^R$};
    \draw (t2) -- (2,2.5); \node[above] at (1,2.5) {$A_P$}; \node at (2.4,2.5) {$\delta u_P^R$};
    \draw (t3) -- (2,1.5); \node[above] at (1,1.5) {$A_R$}; \node at (2.3,1.5) {$\delta^2 u_R^R$};

    \node at (0,0) {$0$};

    \node[left] at (0,3) {R};
    \node[left] at (0,1.9125) {R};
    \node[left] at (0,1) {R};

    \node[draw, dashed, fit=(t1)(t2), inner sep=10pt] {};
    \end{tikzpicture}

    \vspace{0.3em}
    \small{$a_3 = a_R$}
    \end{minipage}

    \caption{Decision problems in the case of $T=3$ when $a_3=a_P$ and $a_3=a_R$.}
    \label{fig:absent-minded}
\end{figure}

When $\sigma_T^*\in(0,1)$, it must be the case that $p_G^*=u_R^P/u_P^P$. If $p_G^*=u_R^P/u_P^P$ solves the problem in Equation \eqref{E:exante}, one obtains Equation \eqref{E:char} as a necessary and sufficient condition. Thus, the ex-ante best-response of the absentminded respondent to a strategy $\sigma^*$ is the same as the optimal strategy at any information set. This result is analogous to the fact that every \textit{planning-optimal} strategy is also \textit{action-optimal} in \cite{ahp}.

\begin{thm}[Equivalence between MBE and Ex-Ante Equilibria]\label{T:exanteequiv}
    Let $(\sigma^*,p^*)$ be a strategy profile. Then $(\sigma^*,p^*)$ can occur in an MBE if and only if $(\sigma^*,p^*)$ is an ex-ante equilibrium.
\end{thm}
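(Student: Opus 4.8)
The plan is to establish the two directions of the equivalence separately, exploiting the fact that both solution concepts share condition (i) verbatim --- the proposer's best-response requirement is identical in both definitions. Thus the entire content of the theorem reduces to showing that, for any fixed proposer strategy $\sigma^*$ satisfying (i), the respondent's optimality condition (iii) of AHPE coincides with condition (ii) of ex-ante equilibrium. I would first record that any $\sigma^*$ arising in either equilibrium concept must, by Equation \eqref{E:greedearly}, satisfy $\sigma_t^*=1$ for all $t<T$ whenever $\sigma_T^*>0$; the only remaining cases are $\sigma^*\equiv 0$ (handled separately and trivially, since then the respondent never faces a greedy offer on path) and the family parametrized by $\sigma_T^*\in(0,1]$. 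This lets me substitute the simplified form of the ex-ante objective displayed just before the theorem and compare it term-by-term against the AHPE objective in condition (iii).

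The core computation is to show both respondent programs are \emph{linear in} $p$ and share the same sign of the slope, so they have the same maximizers. For the AHPE side, I would take the program in condition (iii), substitute the consistent beliefs $\alpha_t^G$ from the consistency equations, and clear the common normalizing denominator $\sum_k \sigma_k \gamma_k$; this turns the first-order condition into a statement about the \emph{unnormalized} sum $\sum_t \sigma_t \gamma_t(\delta^{t-1}V/4 - \delta^t U_{t+1}^R)$. For the ex-ante side, I would expand the simplified objective and differentiate in $p$. The key algebraic claim is that the unconditional weights $\gamma_t$ --- which by construction encode the long-run frequency $(1-p_G^*)^{t-1}\prod_{\ell<t}\sigma_\ell$ of reaching period $t$ --- are \emph{exactly} the reach probabilities appearing in the ex-ante formulation's discount-and-survival factors $(1-p)^{t-1}$. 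Establishing that these two slope expressions agree up to a positive multiplicative constant is what delivers identical best responses, and hence identical interior indifference condition \eqref{E:char} (pinning $p_G^*=2/3$) and identical corner solutions.

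The main obstacle I expect is the bookkeeping in the interior mixed case: the ex-ante program treats the final node's payoff as a \emph{mixture} over the $a_T=G$ and $a_T=F$ decision problems weighted by $\sigma_T^*$ and $1-\sigma_T^*$, whereas the AHPE program bundles everything into the conditional beliefs $\alpha^G$. I must check carefully that the ex-ante respondent's inability to commit separately at each node --- she chooses one $p$ for the whole information set --- produces precisely the same first-order condition as the AHPE respondent holding the continuation $p_G^*$ fixed. This is the substantive content of the ``every planning-optimal strategy is action-optimal'' analogy to \cite{ahp}, and it hinges on the observation that in both formulations the respondent optimizes a single scalar $p$ against payoffs that are affine in $p$ along the realized play path.

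Finally, I would close the argument by invoking the already-displayed derivation: the text has shown that $p_G^*=2/3$ solving \eqref{E:exante} yields \eqref{E:char} as a necessary and sufficient condition, which is exactly the AHPE indifference equation from Theorem \ref{T:char}. So the forward direction (AHPE $\Rightarrow$ ex-ante) follows because any AHPE strategy profile satisfies (i) and makes $p_G^*$ optimal in the action-local sense, which I show implies ex-ante optimality; the reverse direction (ex-ante $\Rightarrow$ AHPE) follows by defining the beliefs $\alpha^G,\alpha^F$ through the consistency equations and verifying (ii) holds by construction and (iii) holds by the same slope-equality. The three cases of Theorem \ref{T:char} then transfer verbatim, completing the equivalence.
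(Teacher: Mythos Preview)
Your overall plan matches the paper's: both arguments observe that condition (i) is shared verbatim, reduce via Equation~\eqref{E:greedearly} to the structure $\sigma_t^*=1$ for $t<T$ (or $\sigma^*\equiv 0$), and then compare the respondent's two optimality conditions case by case. The paper does exactly this, splitting on $\sigma_T^*\in\{0,1\}$ versus $\sigma_T^*\in(0,1)$ and delegating the mixed case to Lemma~\ref{L:equiv}, which is the first-order-condition computation you describe.

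There is, however, one genuine error in your write-up. You claim that ``both respondent programs are \emph{linear in} $p$'' and that matching slope signs therefore yields identical maximizers. This is false for the ex-ante program: the objective in~\eqref{E:exante} contains factors $(1-p)^{t-1}$ multiplied by $p$, so it is a degree-$T$ polynomial in $p$, not affine. (The AHPE program \emph{is} linear, because there $U_{t+1}^R(\sigma^*,p_G^*)$ is evaluated at the fixed $p_G^*$, not at the running variable $p$.) Consequently, showing that the ex-ante derivative at $p=p_G^*$ agrees in sign with the AHPE slope is only a first-order statement; it does not by itself deliver that $p_G^*$ globally maximizes the ex-ante objective, nor that an ex-ante maximizer satisfies the AHPE condition at corners. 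The paper closes this gap by treating the corners $\sigma_T^*\in\{0,1\}$ directly (where the ex-ante objective simplifies and the maximizer can be verified by inspection), and by invoking concavity of the ex-ante objective for the interior case in Lemma~\ref{L:equiv}. Your proposal needs one of these two ingredients --- either a concavity argument or the explicit case check --- to be complete; the ``affine in $p$ along the realized play path'' remark does not supply it, since the ex-ante respondent is committing the same $p$ at every node simultaneously, which is precisely what makes the objective nonlinear.
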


The equivalence between MBE and ex-ante equilibria is also convenient; finding ex-ante equilibria does not require one to specify the agent's conditional beliefs over calendar time.\footnote{In particular, this equivalence is used in the proof of Theorem \ref{T:char}.} The respondent's decision problem is a simple optimization problem depending only on $\sigma^*$. This equivalence is used extensively in the proof of Theorem \ref{T:char}.

\subsection{Discussion}

As mentioned in the introduction, absentmindedness can reflect a certain form of bounded rationality where players cannot condition their actions on some details of a dynamic interaction. Particularly, in dynamic games with long-time horizons, players may not condition on histories (especially if histories are long and complex). Moreover, backward induction may not be behaviorally compelling in long-horizon dynamic games, and players may not condition strategies on the number of remaining periods. Since Theorem \ref{T:exanteequiv} shows that the set of MBE of the bargaining game is exactly the set of equilibria of a related game where players commit ex-ante to stationary behavioral strategies, absentmindedness is an appropriate model for this kind of bounded rationality.

The methodology outlined above can be applied to a larger class of dynamic games. Bargaining is a natural application for two reasons. First, upon rejecting an offer in period $t$, the situation in period $t+1$ looks identical, except that the remaining number of periods has decreased by $1$. Importantly, the set of actions available to parties at each decision node remains the same (e.g. the respondent chooses whether to accept or reject). Such games are well-suited to the absentmindedness approach, since all decision nodes for a player can belong to the same information set. The classic centipede game also has this property (Figure \ref{F:cent}). Contrast this, for instance, with chess. In chess, after a player makes a move, the board state when a player moves again is almost always different, and players clearly understand this. The set of possible actions, or moves, also changes in the future. Second, the predictions under imperfect recall and under perfect recall in bargaining games are substantially different.

\begin{figure}[h!]
    \centering
    \begin{tikzpicture}[
    scale=1.5,
    font=\small,
    dot/.style={circle, fill=black, inner sep=1.5pt}
]

\node[dot, label={above left:P1}] (n1) at (0,0) {};
\node[dot, label={above left:P2}] (n2) at (2,0) {};
\node[dot, label={above left:P1}] (n3) at (4,0) {};
\node[dot, label={above left:P2}] (n4) at (6,0) {};

\coordinate (end) at (8,0);

\draw (n1) -- (n2) node[midway, above] {$C$};
\draw (n2) -- (n3) node[midway, above] {$C$};
\draw (n3) -- (n4) node[midway, above] {$C$};
\draw (n4) -- (end) node[midway, above] {$C$};

\node[right] at (end) {$(4,4)$};

\draw (n1) -- (0,-1.2) node[midway, left] {$S$} node[below] {$(1,1)$};
\draw (n2) -- (2,-1.2) node[midway, left] {$S$} node[below] {$(0,3)$};
\draw (n3) -- (4,-1.2) node[midway, left] {$S$} node[below] {$(3,2)$};
\draw (n4) -- (6,-1.2) node[midway, left] {$S$} node[below] {$(2,5)$};

\end{tikzpicture}
    \caption{The classic centipede game \citep{rosenthal81}}
    \label{F:cent}
\end{figure}

\section{Absentminded Proposer}

Often, results in bargaining are quite sensitive to the bargaining protocol --- that is, they depend on who makes offers and when. How robust are the delay results from the baseline model to alternative specifications of the bargaining protocol? I'll focus on the case where the proposer is absentminded, but the respondent has perfect recall. For ease of exposition, suppose that $T=2$. The main qualitative conclusions extend simply to the case where $T\geq2$. In this setting, the proposer strategy is $\phi\in[0,1]$, which is the probability of offering $a_P$. The respondent's strategy is a pair $(q_1,q_2)$ where $q_t\in[0,1]$ is the probability of accepting the proposer's preferred offer in period $t$. As before, I assume that the respondent always accepts an offer of $a_R$. The proposer's belief $\alpha$ that the current period is $t=1$ is consistent with $(\phi,q)$ if
\begin{align*}
    \alpha=\gamma=\frac{1}{1+(1-q_1)\phi}.   
\end{align*}
Since the proposer is absentminded and the respondent has perfect recall, offers no longer convey information regarding the calendar time in equilibrium. Thus, the proposer's belief that she is in $t=1$ exactly coincides with the long-run frequency of being in $t=1$. I define MBE in this setting analogously.
\begin{defn}[MBE --- Proposer Absentminded]
    A strategy profile $(\phi^*,q^*)$ and beliefs $\alpha$ is an MBE if 
    \begin{enumerate}[(i)]
        \item $q_t^*$ is a best response to $\phi^*$. That is, $q_2^*=1$ and
        \begin{align*}
           q_1^*\in\arg\max_{q\in[0,1]}qu_P^R+(1-q)\delta\left(\phi^*u_P^R+(1-\phi^*)u_R^R\right) 
        \end{align*}
        \item $\alpha$ is consistent with $(\phi^*,q^*)$.
        \item $\phi^*$ is a best response to $q^*$ given beliefs $\alpha$
        \begin{align*}
            \phi^*\in\arg\max_{\phi\in[0,1]}\;&\alpha\left(\phi q_1^*u_P^P+(1-\phi)u_R^P+\delta\phi(1-q_1^*)\left(\phi^*u_P^P+(1-\phi^*)u_R^P\right)\right)\\&+(1-\alpha)\left(\phi u_P^P+(1-\phi)u_R^P\right)
        \end{align*}
    \end{enumerate}
\end{defn}

As in the case studied in Section \ref{S:analysis}, there is clearly an MBE where $\phi^*=1$ and $q_1^*=1$. The proposer knows that the respondent always accepts her preferred offer, and has no profitable deviation. If the respondent rejects $a_P$ in $t=1$, he receives $a_P$ for certain in period $t=2$. Therefore, the respondent has no profitable deviation.

    If $\delta\geq u_P^R/u_R^R$, then there is an MBE where $\phi^*=0$ and $q_1^*=0$. Given these strategies, the proposer's belief over calendar time is degenerate on $t=1$. Therefore, she chooses $\phi^*$ to maximize
    \begin{align*}
        &\max_\phi\phi q_1^*u_P^P+(1-\phi)u_R^P+\delta\phi(1-q_1^*)u_R^P\\\equiv&\max_\phi(1-\phi)u_R^P+\delta\phi u_R^P
    \end{align*}
    which is solved by $\phi^*=0$. The respondent rationally rejects $a_P$ whenever $u_P^R\leq\delta u_R^R$, or when $\delta\geq u_P^R/u_R^R$.

    As in Section \ref{S:analysis}, my main case of interest is when there is a mixed MBE where $\phi^*,q_1^*\in(0,1)$. In order for $q_1^*\in(0,1)$, it must be the case that
    \begin{align*}
        u_P^R=\delta\left(\phi^*u_P^R+(1-\phi^*)u_R^R\right)\iff \phi^*=\frac{\frac{u_P^R}{\delta}-u_R^R}{u_P^R-u_R^R}
    \end{align*}
    and $q_1^*$ solves
    \begin{align}\label{E:foc}
        \alpha \left(q_1^*u_P^P-u_R^P+\delta(1-q_1^*)\left(\phi^* u_P^P+(1-\phi^*)u_R^P\right)\right)+(1-\alpha)(u_P^P-u_R^P)=0\nonumber\\\iff \left(q_1^*u_P^P-u_R^P+\delta(1-q_1^*)\left(\phi^* u_P^P+(1-\phi^*)u_R^P\right)\right)+(1-q_1^*)\phi^*(u_P^P-u_R^P)=0
    \end{align}
    in order for the proposer to be willing to randomize.

    One can verify that there is a solution $q_1^*\in(0,1)$ to Equation \eqref{E:foc} only for intermediate values of $\delta$. That is, there is a mixed MBE only when $\delta\in[u_P^R/u_R^R,\overline{\delta}]$ for some $1>\overline{\delta}>u_P^R/u_R^R$.\footnote{The proof of Proposition \ref{P:forgetfulchar} gives an explicit expression for $\overline{\delta}$.} If $\delta$ is too high, no mixed MBE exists; the proposer can never be incentivized to offer $a_R$, even if $q_1^*$ is very small. The proposer knows that $a_P$ will always be accepted in $t=2$. Therefore, for high $\delta$, the proposer will always offer $a_P$ even if it is very likely to be rejected in $t=1$.

    \begin{prop}\label{P:forgetfulchar}
        Let $T=2$. There are at most three strategy profiles $(\phi^*,q_1^*)$ that can occur in an MBE when the proposer is absentminded:
        \begin{enumerate}
            \item (Proposer Preferred Equilibrium): $\phi^*=1$ and $q_1^*=1$.
            \item (Respondent Preferred Equilibrium): $\phi^*=0$ and $q_1^*=0$ only if $\delta\geq u_P^R/u_R^R$.
            \item (Mixed MBE): $\phi^*=\frac{\frac{u_P^R}{\delta}-u_R^R}{u_P^R-u_R^R}$ and $q_1^*\in(0,1)$ solves Equation \eqref{E:foc} only if $\delta\in[u_P^R/u_R^R,\overline{\delta}]$.
        \end{enumerate}
    \end{prop}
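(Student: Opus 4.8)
The plan is to reduce the fixed-point problem to the signs of two affine slopes and then to split on the respondent's acceptance probability $q_1^*$. Because the respondent has perfect recall, his final-period incentive gives $q_2^*=1$, so his only genuine choice is $q_1^*$, and his objective is affine in $q_1$ with slope
\begin{align*}
    \frac{V}{4}-\delta\left(\phi^*\frac{V}{4}+(1-\phi^*)\frac{V}{2}\right).
\end{align*}
Holding the consistent belief $\alpha=\tfrac{1}{1+(1-q_1^*)\phi^*}$ and the continuation value fixed, the proposer's objective is likewise affine in $\phi$. Thus every best response is a corner or an interior point sustained by indifference, and an AHPE is pinned down once we know which of $q_1^*\in\{0,1\}$ or $q_1^*\in(0,1)$ obtains and evaluate the proposer's slope at the induced belief.

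First I would dispatch the two pure respondent choices. If $q_1^*=1$, consistency forces $\alpha=1$, the proposer's slope collapses to $\frac{3V}{4}-\frac{V}{2}=\frac V4>0$, so $\phi^*=1$; checking the respondent's slope at $\phi^*=1$ gives $\frac V4(1-\delta)\ge0$, so $(1,1)$ is indeed an AHPE and is the only profile with $q_1^*=1$. If instead $q_1^*=0$, the pairing with $\phi^*=0$ makes the belief degenerate on $t=1$, the proposer's slope equals $\frac V2(\delta-1)\le0$ so $\phi^*=0$ is optimal, and the respondent's reject incentive $\frac V4\le\delta\frac V2$ holds exactly when $\delta\ge\frac12$; this is the fair equilibrium and its stated necessary condition.

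The substantive case is $q_1^*\in(0,1)$, which requires respondent indifference. Setting his slope to zero yields $\phi^*=2-\frac1\delta$, and demanding $\phi^*\in[0,1]$ already forces $\delta\ge\frac12$. With $\phi^*$ fixed this way, the proposer must also be indifferent, which is exactly condition \eqref{E:foc}; substituting $\phi^*=2-\frac1\delta$ turns \eqref{E:foc} into one equation in the single unknown $q_1^*$. The crux is to show this equation has a root in $(0,1)$ precisely for $\delta\in[\tfrac12,\overline{\delta}]$. I would verify that at $\delta=\tfrac12$ the root lies strictly inside $(0,1)$ and that $q_1^*$ moves monotonically toward the endpoint $q_1^*=0$ as $\delta$ increases, so that the interior root disappears exactly when it reaches $q_1^*=0$. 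Evaluating the proposer's indifference and the respondent's indifference jointly at $q_1^*=0$ forces the proposer's mixing probability $\frac{2(1-\delta)}{1+\delta}$ to coincide with $2-\frac1\delta$, i.e. $4\delta^2-\delta-1=0$, whose admissible root $\overline{\delta}=\frac{1+\sqrt{17}}{8}\approx0.6404$ is the announced threshold.

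The main obstacle I anticipate is this boundary analysis in the mixed case: the threshold $\overline{\delta}$ is not produced by either player's condition in isolation but by the requirement that the proposer's indifference and the respondent's interior randomization hold simultaneously at a consistent belief, so the monotonicity of $q_1^*$ in $\delta$ and the location of the exit point must be established with care. Once the endpoints and this monotonicity are in hand, both the existence of the mixed profile and the ``only if'' restriction $\delta\in[\tfrac12,\overline{\delta}]$ follow, and collecting the three branches yields the claimed list of strategy profiles.
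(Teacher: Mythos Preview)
Your case split is on $q_1^*$, whereas the paper splits on $\phi^*$. The paper's split is cleaner here because the respondent's program depends only on $\phi^*$, so each value of $\phi^*$ determines $q_1^*$ directly; by contrast, the proposer's best response given $q_1^*$ runs through the consistent belief $\alpha=\tfrac{1}{1+(1-q_1^*)\phi^*}$, which itself depends on $\phi^*$, so your route forces you to solve a fixed point in each branch. This shows up as a gap in your $q_1^*=0$ branch: you only verify that $(\phi^*,q_1^*)=(0,0)$ is an AHPE, but for the ``at most three'' claim you must also rule out $\phi^*\in(0,1]$ paired with $q_1^*=0$. The proposer's indifference condition at $q_1^*=0$ has the interior solution $\phi^*=\tfrac{2(1-\delta)}{1+\delta}$---the very expression you invoke later at the threshold---so this subcase cannot simply be skipped.

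For the mixed branch, the paper's argument is more direct than your monotonicity plan. Once $\phi^*=2-\tfrac{1}{\delta}$ is substituted, equation \eqref{E:foc} is \emph{linear} in $q_1^*$ and yields the closed form
\[
q_1^*=\frac{\tfrac{V}{2}-\phi^*\tfrac{V}{4}-\delta\bigl(\phi^*\tfrac{3V}{4}+(1-\phi^*)\tfrac{V}{2}\bigr)}{\tfrac{3V}{4}-\phi^*\tfrac{V}{4}-\delta\bigl(\phi^*\tfrac{3V}{4}+(1-\phi^*)\tfrac{V}{2}\bigr)},
\]
from which $q_1^*\le 1$ is immediate and $q_1^*\ge 0$ reduces algebraically to $4\delta^2-\delta-1\le 0$, delivering $\overline{\delta}=\tfrac{1+\sqrt{17}}{8}$ directly. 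There is no need to establish monotonicity of the root in $\delta$ or to locate an exit point by a limiting argument.
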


    The key difference between the mixed MBE in Proposition \ref{P:forgetfulchar} and the mixed MBE studied in Section \ref{S:analysis} is that, when the respondent has perfect recall, there is no risk of a trade falling through. There is still delay, however, in the mixed strategy MBE. Delay occurs with probability $\phi^*(1-q_1^*)$. As $\delta\to\overline{\delta}$, $q_1^*\to0$. Therefore, the probability of delay approaches $(\frac{u_P^R}{\overline{\delta}}-u_R^R)({u_P^R-u_R^R})^{-1}$. There is a sharp discontinuity in the probability of delay at $\bar{\delta}$, since delay cannot occur in any MBE when $\delta>\bar{\delta}$.

\section{A Model with Side Payments}\label{S:general}

In many settings where parties bargain over a course of action, side payments are available. For instance, when firms negotiate over which technology to adopt during a merger, the firm that must adapt can be compensated for the switching cost. As it turns out, access to transfers considerably changes the analysis. I'll present a model with a \textit{single} action and side payments, to highlight the role played by transfers. This model amounts to bargaining over how to divide the surplus from taking the action. The key qualitative insights extend readily to the case with multiple actions.

Formally, two players bargain over a fixed pie of size $V>0$, representing the total undiscounted surplus from taking the action. Suppose, for ease of exposition, that the trade deadline is $T=2$.\footnote{Proposition \ref{P:cont} extends easily to the case where $T\geq2$.} One player is the proposer, who offers a share $x\in[0,1]$ of the pie to the respondent in each period where no previous offer has been accepted. Players discount the future, so the proposer's payoff from reaching a deal when offer $x$ is made in period $t$ is $\delta^{t-1}(1-x)V$ and the respondent's payoff is $\delta^{t-1}xV$.

A history in period $t$ is a sequence $h_t=(x_1,...,x_{t-1})$ of past (rejected) offers. Denote the set of histories by $\mathcal{H}_t$. Let $\sigma_t:\mathcal{H}_t\to\Delta [0,1]$ denote the proposer's behavioral strategy in period $t$ and let $p:[0,1]\to[0,1]$ map an offer $x$ to the probability $p(x)$ that the respondent accepts an offer. I'll often abuse notation in the following manner: I'll let $\sigma_1(x)$ denote the offer distribution in period $t=1$ and let $\sigma_2(\hat{x})[x_1]$ denote the offer distribution at $t=2$ given that $x_1$ was offered in $t=1$. Let $\alpha(x)$ denote the respondent's belief that the calendar time is $t=1$ after receiving an offer $x$. The belief rule $\alpha(x)$ is consistent with $(\sigma_1,\sigma_2)$ and $p(x)$ if $\alpha(x)$ is formed via Bayes' rule wherever possible. Observe also that the respondent's unconditional belief that the time is $t=1$ is
\begin{align*}
    \gamma=\frac{1}{1+\int_0^1(1-p(x))d\sigma_1(x)}.
\end{align*}

A Multiselves Bargaining Equilibrium can be defined in a manner similar to the baseline model. Continuation values are simpler to express since $T=2$.

\begin{defn}[MBE --- With Side Payments]
    A proposer strategy $(\sigma_1^*,\sigma_2^*)$, a respondent strategy $p^*(x)$, and beliefs $\alpha(x)$ is an Multiselves Bargaining Equilibrium (MBE) if
    \begin{enumerate}[(i)]
        \item $(\sigma_1^*,\sigma_2^*)$ is sequentially rational given $p^*$. That is, if $\tilde{x}_1\in\supp\sigma_1^*$
        \begin{align*}
            \tilde{x}_1\in\arg\max_x\;(1-x)Vp^*(x)+\delta(1-p^*(x))\int_0^1(1-\hat{x})Vp^*(\hat{x})d\sigma_2^*(\hat{x})[\tilde{x}_1]
        \end{align*}
        and if $\tilde{x}_2\in\supp\sigma_2^*[x_1]$ for some $x_1\in[0,1]$, 
        \begin{align*}
            \tilde{x}_2\in\arg\max_x\;(1-x)Vp^*(x)
        \end{align*}
        \item $\alpha(x)$ is consistent with $(\sigma_1^*,\sigma_2^*)$ and $p^*(x)$.
        \item For each $x$, $p^*(x)$ solves
        \begin{align*}
            \max_p\left\{\alpha(x)\left(pxV+(1-p)\delta\int_0^1\hat{x}Vp^*(\hat{x})d\sigma_2^*(\hat{x})[x]\right)+(1-\alpha(x))\delta pxV\right\}
        \end{align*}
        \item If $x\in\supp\sigma_1^*$, $p^*(x)>0$.
    \end{enumerate}
\end{defn}

The restriction to Markov strategies for the proposer is not without loss of generality when side payments are possible. In particular, the set of possible outcomes drastically shrinks if one requires strategies to satisfy a Markov property, as I will show in Propositions \ref{P:cont} and \ref{P:contdelay}.
\begin{defn}[Markov Property]
    An MBE $(\sigma_1^*,\sigma_2^*)$, $p^*(x)$, and $\alpha(x)$ satisfies the Markov property if $\sigma_2^*[x_1]=\sigma_2^*[x_1
']$ for all $x_1,x_1'\in [0,1]$. Often, I call MBE which satisfy the Markov property "Markov Perfect MBE".\footnote{See, e.g., \cite{maskintirole2001} for a definition of Markov Perfect equilibria in dynamic games with perfect recall.}
\end{defn}

Finally, an MBE has delay if there is positive probability of reaching period $t=2$. That is, there is delay whenever there is no immediate agreement. 
\begin{defn}[Delay]
    An MBE $(\sigma_1^*,\sigma_2^*)$, $p^*(x)$, $\alpha(x)$ has delay if 
    \begin{align*}
        \int_Xp^*(x)d\sigma_1^*(x)<1.
    \end{align*}
\end{defn}

\begin{prop}\label{P:acceptany}
    If $\delta<1$, in any Markov Perfect MBE, $\sigma_2^*$ puts probability $1$ on $x=0$. If $\delta=1$, $\sigma_2^*$ puts probability $1$ on some offer $x'\in[0,1]$.
\end{prop}
To see why, observe first that the respondent accepts any offer $x\geq \delta E_{\sigma_2^*}[\hat{x}]$. That is, he accepts any offer that leaves him a greater share of the surplus than the expected continuation value if $t=1$. Therefore by sequential rationality in $t=2$, if $\tilde{x}_2\in\supp\sigma_2^*$, it must be the case that $\tilde{x}_2\leq \delta E_{\sigma_2^*}[\hat{x}]$. But since every point in the support of $\sigma_2^*$ is weakly below $\delta E_{\sigma_2^*}[\hat{x}]$, it must be the case that $\sigma_2^*$ is degenerate on $0$ when $\delta<1$. If $\delta=1$, it must be the case that $\sigma_2^*$ is degenerate on some point. The following proposition follows immediately from Proposition \ref{P:acceptany}.

\begin{prop}[TU Markov MBE]\label{P:cont}
    The following cases characterize Markov Perfect MBE.
    
    \begin{enumerate}
        \item If $(\sigma_1^*,\sigma_2^*)$ and $p^*(x)$ occur in a Markov Perfect MBE and $\delta<1$, then $\sigma_1^*$ and $\sigma_2^*$ place probability $1$ on $x=0$ and $p^*(x)=1$ for all $x\in[0,1]$.
        \item If $(\sigma_1^*,\sigma_2^*)$ and $p^*(x)$ occur in a Markov Perfect MBE, $\delta=1$, then $\sigma_1^*$ and $\sigma_2^*$ place probability $1$ on some $x'\in[0,1]$ and $p^*(x)=1$ for all $x\in[x',1]$.
    \end{enumerate}
\end{prop}

Proposition \ref{P:cont} shows that the respondent possesses no bargaining power when utility is transferable and players are impatient. This is due to the ability for the proposer to \textit{fine-tune} her offers. The offer space is so rich that in any strategy profile where the proposer makes an acceptable offer $x$ which leaves the respondent with positive surplus in period $t=2$, there is a profitable deviation --- namely, offering $\delta x$. Since the respondent is always willing to accept $\delta x$, there cannot be an equilibrium which leaves the respondent with any surplus. Essentially, since concessions can be arbitrarily small, there is no credible concession made on the equilibrium path that gives the respondent a strictly positive share of the surplus.

When players are patient (i.e. $\delta=1$),\footnote{The game form is well defined even when $\delta=1$ since I assume a finite time horizon.} there are non-Markovian MBE that feature delay. To show this, I'll construct an example of such an MBE. 

\begin{ex}[MBE with Delay]\label{ex:delay}
    Suppose that $\supp\sigma_1^*=\{1/4\}$. That is, the proposer sends a "greedy offer" in $t=1$, which would secure three-fourths of the pie for herself if accepted. On the equilibrium path, the proposer will randomize between the "greedy" offer and a "fair" offer which gives each player half the pie. That is, $\supp\sigma_2^*[1/4]=\{1/4,1/2\}$. Off-path, suppose that $\supp\sigma_2^*[x]=\{1/2\}$ whenever $x\neq1/4$. Hence, the proposer's strategy is non-Markovian.
    
    The respondent strategy is the following: $p^*(1/4)=2/3$ (to hold the proposer indifferent between offering $1/4$ and $1/2$), $p^*(x)=1$ for any $x\geq1/2$, and $p^*(x)=0$ for any $x<1/2$ such that $x\neq1/4$. By an argument similar to Theorem \ref{T:char}, $p^*(1/4)$ is a best response to the proposer's strategy (provided the proposer randomizes in $t=2$ in such a manner that the respondent is indifferent between accepting and rejecting). Since $x=1/2$ is the best offer the respondent can receive in equilibrium, clearly $p^*(x)=1$ for any $x\geq1/2$. Finally, the respondent should optimally reject any off-path offer below $1/2$. Setting the respondent's belief to $\alpha^*(x)=1$ after $x<1/2$ with $x\neq1/4$ supports this equilibrium, since receiving an off-path offer below $1/2$ leads the respondent to conclude that he is in period $t=1$ and will receive an offer of $1/2$ for certain in the next period.
\end{ex}

Non-Markovian strategies, and the fact that both parties are patient, allow one to effectively replicate the behavior exhibited in the baseline model. This is because the respondent rationally rejects any offer below $1/2$ (except for $1/4$), effectively constraining the space of offers to a binary set $\{1/4,1/2\}$. The respondent rejects the offer $x=1/4$ with positive probability, in hopes that $t=1$ and he will receive a fair offer of $x=1/2$ in $t=2$. The proposer mixes between $x=1/4$ and $x=1/2$ in the second period. Offering $x=1/2$ prevents the deal from falling through, whereas offering $x=1/4$ is rejected with positive probability but secures a larger share of the surplus for the proposer if accepted. 

In the constructed MBE with delay, the respondent is surprised by an offer $x\neq1/4,1/2$. He then becomes \textit{optimistic} that he will receive a more favorable offer in the subsequent period. This mechanism resembles \cite{friedenberg2019}, despite the fact that there is no strategic uncertainty off the equilibrium path.

\begin{prop}[TU MBE with Delay]\label{P:contdelay}
    There exists an MBE with delay if and only if $\delta=1$. Moreover, if $\delta=1$, for any pair $x_L,x_H\in X$ with $x_L<x_H$, there exists an MBE where the proposer offers $x_L$ in period $1$, $p^*(x_L)<1$, and on the equilibrium path the proposer mixes between $x_L$ and $x_H$ in period $2$.
\end{prop}

Patience is necessary for an MBE with delay by the same undercutting argument behind case $1$ of Proposition \ref{P:acceptany} and Proposition \ref{P:cont}. If $\delta<1$, let $\bar{x}$ be the largest possible offer on the equilibrium path. Then $p^*(\bar{x})=1$. In order for sending offer $\bar{x}$ to be played on path by the proposer, it must be the case that $p^*(x)<1$ for all offers $x<\bar{x}$. Such a strategy cannot be a best response for the respondent. To see why, suppose that $x\in(\delta\bar{x},\bar{x})$ and suppose that $\alpha(x)=1$ (so that the respondent has the strongest incentive to reject an offer of $x$). Then in order for $p^*(x)<1$, it must be the case that rejecting is weakly better than accepting for the respondent. That is,
\begin{align*}
    xV\leq\delta\int_0^1\hat{x}Vp^*(\hat{x})d\sigma_2^*(\hat{x})[x].
\end{align*}
But since $p^*(\bar{x})=1$, $\supp\sigma_2^*[x]\subseteq[0,\bar{x}]$ by the sequential rationality of off-path behavior. It follows that $xV\leq\delta\bar{x}V$, contradicting the assumption that $x>\delta\bar{x}$.

Thus, by Propositions \ref{P:cont} and \ref{P:contdelay}, in the patient case there are equilibria with delay and these equilibria must fail the Markov property. Moreover, a folk theorem emerges: for any pair $x_L<x_H$, there is an MBE with delay where the proposer offers $x_L$ in $t=1$ and mixes between $x_L$ and $x_H$ in $t=2$. Unlike the baseline model, there is no efficiency loss due to delay since $\delta=1$. However, there is strictly positive probability that no deal is reached by the trade deadline $T$. Thus, these MBE are inefficient relative to those without delay. This is analogous to a number of results in the one-sided asymmetric information bargaining literature. In \cite{flt} and \cite{gsw}, the Coase conjecture holds when players use weakly stationary strategies. However, there are equilibria with delay where the parties use non-Markovian strategies \citep{ausubeldeneck}.\footnote{These equilibria only exist in the no-gap case, where the buyer's value distribution is not bounded away from the seller's marginal cost.} 

I now turn to a characterization of all equilibria with delay. First, in any non-trivial equilibrium with delay, one can show that for all $x_1$ in the support of $\sigma_1^*$, $p^*(x_1)\in(0,1)$. Therefore, for all $x_1\in\supp\sigma_1^*$,
\begin{align*}
    x_1=\alpha(x_1)E_{\sigma_2^*[x_1]}[\hat{x}p^*(\hat{x})]\leq E_{\sigma_2^*[x_1]}[\hat{x}p^*(\hat{x})].
\end{align*}
That is, the proposer makes increasingly generous offers (on average) as the trade deadline looms. Moreover, one can show that the lowest $\bar{x}$ which is accepted for certain (that is, $\bar{x}=\inf\{x:p^*(x)=1\}$) is offered on the equilibrium path in $t=2$, and is the \textit{only} offer which reveals the period to be $t=2$. That is, the proposer pools her offers in period $1$ and $2$, with the exception that the maximal on-path offer $\bar{x}$ can be used to reveal that the trade deadline looms. 

\begin{defn}[On-Path Markov Property]
    An MBE $(\sigma_1^*,\sigma_2^*)$ and $p^*$ satisfies the on-path Markov property if $\forall x_1,x_1'\in\supp\sigma_1^*$, $\sigma_2^*[x_1]=\sigma_2^*[x_1']$.
\end{defn}

When restricting attention to equilibria that satisfy an \textit{on-path} Markov requirement (that is, on the equilibrium path, the proposer's period $2$ strategy is independent of her offer in $t=1$), one can show that any equilibrium with delay has a simple binary structure, as in Example \ref{ex:delay}. 

\begin{prop}[Characterizing Equilibria with Delay]\label{P:necessity}
    Let $\delta=1$. Let $(\sigma_1^*,\sigma_2^*)$ and $p^*$ compose an MBE with delay where $p^*(x_1)>0$ for all $x_1\in\supp\sigma_1^*$.
    \begin{enumerate}
        \item For all $x_1\in\supp\sigma_1^*$, $x_1\leq E_{\sigma_2^*[x_1]}[\hat{x}p^*(\hat{x})]$.
        \item For all $x_1\in\supp\sigma_1^*$, $\supp\sigma_2^*[x_1]\subseteq\supp\sigma_1^*\cup\{\bar{x}\}$.
        \item If $(\sigma_1^*,\sigma_2^*)$ and $p^*$ satisfy the on-path Markov property, then $\exists x_L,x_H\in [0,1]$ with $x_L<x_H$ such that $\supp\sigma_1^*=\{x_L\}$ and $\supp\sigma_2^*[x_L]=\{x_L,x_H\}$.
    \end{enumerate}
\end{prop}
\section{Conclusion}

Absentmindedness can be a source of delay in bargaining. An absentminded respondent rejects unfavorable offers in hopes of receiving a favorable offer later; a cognizant proposer makes favorable offers more frequently as the game goes on to prevent the deal from falling through. The key difference from standard, perfect recall, finite-horizon bargaining models is that the absentminded party can rationally reject unfavorable offers, even in the final period of the game. Inefficiency is a natural consequence of this relationship, as the cognizant party makes offers which are likely to be rejected in hopes of securing an agreement on her preferred terms. Delay is persistent; remarkably, in the patient limit and as the deadline becomes increasingly distant, the probability of delay remains constant. This analysis extends to the case with transferable utility, provided that the parties are patient.

\appendix
\section*{Appendix}

\begin{lem}\label{L:equiv}
    Equation \eqref{E:char} has a solution $\sigma_T^*\in(0,1)$ if and only if $p^*=u_R^P/u_P^P$ maximizes Equation \eqref{E:exante} given $\sigma_T^*$, and given $\sigma_t^*=1$ for $t<T$.
\end{lem}
\begin{proof}
Equation (3) can be written as
\begin{align*}
    \max_{p} \sigma_{T}^{*} &\left( \sum_{t=1}^{T-1} \delta^{t-1} p (1-p)^{t-1} u_P^R + \delta^{T} (1-p)^{T-1} p u_P^R \right)\\&+(1-\sigma_{T}^{*}) \left( \sum_{t=1}^{T-1} \delta^{t-1} p (1-p)^{t-1} u_P^R + \delta^{T-1} (1-p)^{T-1} u_R^R \right).
\end{align*}
Taking first order conditions with respect to $p$ yields
\begin{align*}
    \sigma_{T}^{*} &\left( \sum_{t=1}^{T} \delta^{t-1} u_P^R \left( (1-p)^{t-1} - p(t-1)(1-p)^{t-2} \right) \right)\\&+(1-\sigma_{T}^{*}) \left( \sum_{t=1}^{T-1} \delta^{t-1} u_P^R \left( (1-p)^{t-1} - p(t-1)(1-p)^{t-2} \right) - \delta^{T-1} u_R^R (T-1)(1-p)^{T-2} \right) = 0.
\end{align*}
Substituting unconditional beliefs $\gamma_t$ into the first order condition yields
\begin{align*}
    \sigma_{T}^{*}& \left( \sum_{t=1}^{T} \delta^{t-1} u_P^R \left( \gamma_{t} - \frac{p}{1-p}(t-1)\gamma_{t} \right) \right)\\&+(1-\sigma_{T}^{*}) \left( \sum_{t=1}^{T-1} \delta^{t-1} u_P^R \left( \gamma_{t} - \frac{p}{1-p}(t-1)\gamma_{t} \right) - \delta^{T-1} u_R^R (T-1)\gamma_{T-1} \frac{1}{1-p} \right) = 0.
\end{align*}
Using the definition of conditional beliefs $\alpha_t^P$, where $\alpha_{t}^P = \frac{\sigma_t^* \gamma_t}{\sum \sigma_k^* \gamma_k}$, one can rewrite the above expression as
\begin{align*}
    \sum_{t=1}^{T-1} \delta^{t-1} &u_P^R \left( \alpha_{t}^P - \frac{p}{1-p}(t-1)\alpha_{t}^P \right) + \sigma_{T}^{*} \delta^{T-1} u_P^R \alpha_{T}^P\\&=(1-\sigma_{T}^{*}) \delta^{T-1} u_R^R (T-1) \frac{\alpha_{T-1}^P}{1-p} + \sigma_{T}^{*} u_P^R \frac{p}{1-p}(T-1) \alpha_{T-1}^P \delta^{T-1}.
\end{align*}
It follows that
\begin{align*}
    \sum_{t=1}^{T} &\alpha_{t}^P \delta^{t-1} u_P^R - \sum_{t=1}^{T-1} \alpha_{t}^P \delta^{t-1} u_P^R (t-1) \frac{p}{1-p}\\&=(1-\sigma_{T}^{*}) \delta^{T-1} u_R^R (T-1) \frac{\alpha_{T-1}^P}{1-p} + \sigma_{T}^{*} u_P^R \frac{p}{1-p}(T-1) \alpha_{T-1}^P \delta^{T-1}.
\end{align*}
Expanding the summation terms and plugging in $p=p^{*}$ yields the following expression
\begin{align*}
    \sum_{t=1}^{T} \alpha_{t}^P \left( \delta^{t-1} u_P^R - \delta^{t} \left( \sum_{l=1}^{T-t-1} \delta^{l-1} (1-p^{*})^{l-1} p^{*} u_P^R + \delta^{T-t}(1-p^{*})^{T-t} (\sigma_{T}^{*} p^{*} u_P^R + (1-\sigma_{T}^{*}) u_R^R) \right) \right) 
\end{align*}
\begin{align*}
     =\sum_{t=1}^{T} \alpha_{t}^P \left( \delta^{t-1} u_P^R - \delta^{t} U_{t+1}^{R}(\sigma^{*}, p^{*}) \right) = 0.
\end{align*}
Therefore, $\sigma_{T}^{*}$ solves Equation \eqref{E:char} if and only if $p^{*}$ maximizes Equation \eqref{E:exante} given $\sigma_{T}^{*}$.
\end{proof}

\begin{proofof}{\bf Theorem \ref{T:char}} Let $(\sigma^*,p^*)$ be an MBE. I consider three exhaustive cases: $\sigma_T^*=1$, $\sigma_T^*=0$, and $\sigma_T^*\in(0,1)$. I show that these three cases correspond to cases 1, 2, and 3 in the statement of Theorem \ref{T:char}.

\noindent\textbf{Case I ($\sigma_T^*=1)$:} If $\sigma_T^*=1$, by the proposer's program at period $T$,
\begin{align*}
    p^*u_P^P\geq u_R^P\implies p^*\geq\frac{u_R^P}{u_P^P}.
\end{align*}
Then, by Equation \eqref{E:greedearly}, $\sigma_t^*=1$ for all $t$. Then, the respondent continuation payoff can be expressed as follows:
\begin{align*}
    U_t^R(\sigma^*,p)=pu_P^R+\delta(1-p)U_{t+1}^R(\sigma^*,p_G).
\end{align*}
Iterating backwards from $t=T$ yields a simple closed form solution for $U_t^R(\sigma^*,p)$:
\begin{align*}
    U_{t}^R(\sigma^*,p)=pu_P^R+(1-p)\left(\sum_{k=1}^{T-t}\delta^k(1-p)^{k-1}pu_P^R\right)\leq u_P^R.
\end{align*} 
It follows that the solution to condition (iii) of the definition of MBE is $p^*=1$. This is the proposer preferred.

\noindent\textbf{Case II ($\sigma_{T}^*=0)$:} If $\sigma_T^*=0$, by the proposer's program at period $T$,
\begin{align*}
    p^*u_P^P\leq u_R^P\implies p^*\leq\frac{u_R^P}{u_P^P},
\end{align*}
which implies that
\begin{align*}
    p^*u_P^P+\delta(1-p^*)u_R^P\leq u_R^P+\delta\left(1-\frac{u_R^P}{u_P^P}\right)u_R^P<u_R^P.
\end{align*}
Therefore, $\sigma_{T-1}^*=0$ (the proposer's program in $T-1$ is linear in $s_t$ with slope $p^*u_P^P+\delta(1-p^*)u_R^P-u_R^P$). Iterating backwards yields $\sigma_t^*=0$ for all $t$.

To support this as an equilibrium, it must be the case that $s_t=0$ solves the proposer's program in each period $t$. That is,
\begin{align*}
    p^*u_P^P-u_R^P+\delta(1-p^*)u_R^P&\leq0\\\iff p^*\leq\frac{u_R^P(1-\delta)}{u_P^P-\delta u_R^P}.
\end{align*}
These values of $p^*$ can be supported in equilibrium by setting $\alpha_1^P=1$ and observing that $p^*=0$ is a best response to receiving a greedy offer since
\begin{align*}
    u_P^R\leq\delta u_R^R
\end{align*}
whenever $\delta\geq u_P^R/u_R^R$.

\noindent\textbf{Case III $(\sigma_T\in(0,1))$:} By Lemma \ref{L:equiv}, Equation \eqref{E:char} has a solution if and only if Equation \eqref{E:exante} is maximized at $p^*=u_R^P/u_P^P$. Define the following values
\begin{align*}
    A_\delta(p)&=\sum_{t=1}^T\delta^{t-1}p(1-p)^{t-1}u_P^R\\B_\delta(p)&=\sum_{t=1}^{T-1}\delta^{t-1}p(1-p)^{t-1}u_P^R+\delta^{T-1}(1-p)^{T-1}u_R^R.
\end{align*} 
Then the objective in Equation \eqref{E:exante} can be written as
\begin{align*}
    \sigma_T^*A_\delta(p)+(1-\sigma_T^*)B_\delta(p)
\end{align*}
Therefore $p=p^*=u_R^P/u_P^P$ solves Equation \eqref{E:exante} if the objective is concave in $p$ and if the first order condition holds. Since we must have
\begin{align*}
    \sigma_T^*(\delta)=\frac{-B_\delta'(p^*)}{A_\delta'(p^*)-B_\delta'(p^*)}
\end{align*}
in order for the first order condition to hold, we must have that
\begin{align*}
    \frac{-B'(p^*)}{A'(p^*)-B'(p^*)}\in(0,1).
\end{align*}
Observe first that
\begin{align*}
    A_\delta(p)&=u_P^Rp\frac{(1-\delta^T(1-p)^T)}{1-\delta(1-p)}\\B_\delta(p)&=u_P^Rp\frac{(1-\delta^{T-1}(1-p)^{T-1})}{1-\delta(1-p)}+u_R^R\delta^{T-1}(1-p)^{T-1}.
\end{align*}
and observe that
\begin{align*}
    A_\delta'(p)&=u_P^R\frac{(1-\delta^T(1-p)^T)}{1-\delta(1-p)}\\&+u_P^Rp\frac{(1-\delta(1-p))(\delta^TT(1-p)^{T-1})-(1-\delta^T(1-p)^T)\delta}{(1-\delta(1-p))^2}.
\end{align*}
Since $A_\delta'(p)>0$, then $-B_\delta'(p^*)<-B'_\delta(p^*)+A_\delta'(p^*)$ so $\sigma_T^*(\delta)<1$. Moreover, $\exists \epsilon_1>0$ such that $\sigma_T^*(\delta)>0$ for all $\delta\in[1-\epsilon_1,1]$ if and only if $\sigma_T^*(1)>0$, by continuity. Observe that
\begin{align*}
    A_1(p)&=u_P^R(1-(1-p)^T)\\B_1(p)&=u_P^R(1-(1-p)^{T-1})+u_R^R(1-p)^{T-1}.
\end{align*}
To show that $\sigma_T^*(1)>0$, it suffices to show that $-B_1'(p^*)>0\iff B_1'(p^*)<0$. Observe that
\begin{align*}
    B_1'(p^*)&=u_P^R(T-1)(1-p^*)^{T-2}-u_R^R(T-1)(1-p^*)^{T-2}\\&=-(u_R^R-u_P^R)(T-1)(1-p^*)^{T-2}<0
\end{align*}
since $u_R^R>u_P^R$, $T>1$, and $p^*<1$.

I next show that the objective $\sigma_T^*(\delta)A_\delta(p)+(1-\sigma_T^*(\delta))B_\delta(p)$ is concave in $p$ for sufficiently high $\delta$. To this end, I invoke the same continuity argument by showing that the objective $\sigma_T^*(1)A_1(p)+(1-\sigma_T^*(1))B_1(p)$ is strictly concave in $p$. Since the objective function is continuously differentiable in $\delta$, it follows that $\exists\epsilon_2$ such that $\sigma_T^*(\delta)A_\delta(p)+(1-\sigma_T^*(\delta))B_\delta(p)$ is concave in $p$ for all $\delta\in[1-\epsilon_2,1]$. Observe that the second derivative of $\sigma_T^*(1)A_1(p)+(1-\sigma_T^*(1))B_1(p)$ is
\begin{align*}
    \frac{-B_1'(p^*)A_1''(p)+A_1'(p^*)B_1''(p)}{A_1'(p^*)-B_1'(p^*)}
\end{align*}
which is negative if and only if $A_1'(p^*)B_1''(p)< A_1''(p)B_1'(p^*)$. This expression can be written as
\begin{align*}
    u_P^RT(1-p^*)^{T-1}\left((u_R^R-u_P^R)(T-1)(T-2)(1-p^*)^{T-3}\right)\\<u_P^RT(T-1)(1-p^*)^{T-2}(u_R^R-u_P^R)(T-1)(1-p^*)^{T-2}\\\iff T-2<T-1
\end{align*}
which certainly holds.

Since $\sigma_T^*=\frac{-B_\delta'(p^*)}{A_\delta'(p^*)-B_\delta'(p^*)}$ is the unique solution to equation \eqref{E:exante} whenever $\delta\geq\max\{1-\epsilon_1,1-\epsilon_2\}$, by Lemma \ref{L:equiv}, $(\sigma^*,p^*)$ is the unique MBE with $\sigma_T^*\in(0,1)$. Thus, the threshold can be defined as $\underline{\delta}(T)=\max\{1-\epsilon_1,1-\epsilon_2,u_P^R/u_R^R\}$.

\end{proofof}

\begin{proofof}{\bf Corollary \ref{C:delay}}
    Throughout, let $\sigma^*$ denote the mixed MBE from Theorem \ref{T:char}.

    \begin{enumerate}[(i)]
        \item Observe that
        \begin{align*}
            Pr(\hat{T}>1)=1-Pr(\hat{T}=0)=1-(\sigma_1^*p^*+(1-\sigma_1^*))=1-p^*
        \end{align*}
        since $\sigma_1^*=1$, as desired.
        \item Observe that
        \begin{align*}
            Pr(\hat{T}=t)&=Pr(\hat{T}\neq1,...,t-1)Pr(\hat{T}=t|\hat{T}\neq1,...,t-1)\\&=(1-p^*)^{t-1}p^*
        \end{align*}
        as desired.
        \item Observe that
        \begin{align*}
            Pr(\hat{T}=T)&=Pr(\hat{T}\neq1,...,T-1)Pr(\hat{T}=t|\hat{T}\neq1,...,T-1)\\&=(1-p_G^*)^{t-1}(\sigma_T^*p_G^*+(1-\sigma_T^*))
        \end{align*}
        as desired.
        \item Observe that
        \begin{align*}
            Pr(\hat{T}=\emptyset)&=(1-p^*)^{T-1}(1-\sigma_T^*p^*-(1-\sigma_T^*))\\&=\sigma_T^*(1-p^*)^{T}
        \end{align*}
        as desired.
        \item Observe that
        \begin{align*}
            Pr(\hat{T}=t|\hat{T}\neq\emptyset)=\frac{Pr(\hat{T}=t)}{Pr(\hat{T}\neq\emptyset)}=\frac{p^*(1-p^*)^{t-1}}{1-\sigma_T^*(1-p^*)^T}
        \end{align*}
        and so it follows that
        \begin{align*}
            E[\hat{T}|\hat{T}\neq\emptyset]=\sum_{t=1}^TPr(\hat{T}=t|\hat{T}\neq\emptyset)t=\left(\frac{p_G^*}{1-\sigma_T^*(1-p_G^*)^T}\right)\sum_{t=1}^T(1-p_G^*)^{t-1}t
        \end{align*}
        as desired.
    \end{enumerate}
\end{proofof}

\begin{proofof}{\bf Corollary \ref{C:compstats}}
    Throughout, let $\delta>\delta'\geq\underline{\delta}(T)$. Let $\hat{T}_\delta$ and $\hat{T}_{\delta'}$ be the random variables denoting the trade date when the discount factor is $\delta$ and $\delta'$, respectively.
    \begin{enumerate}[(i)]
        \item Let
        \begin{align*}
            L(\sigma_T,\delta)=\sum_{t=1}^T\alpha_t^P(\sigma_T)\left(\delta^{t-1}u_P^R-\delta^t U_{t+1}^R(\sigma_1^*,...,\sigma_{T-1}^*,\sigma_T,p^*)\right)
        \end{align*}
        and observe that $\sigma_T^*(\delta)$ satisfies $L(\sigma_T^*(\delta),\delta)=0$. Observe that
        \begin{align*}
            L_1(\sigma_T^*(\delta),\delta)=\frac{-\sum_{k=1}^{T-1}\gamma_k\delta^k\delta^{k+1}(1-p^*)^{k+1}(p^*u_P^R-u_R^R)+\sigma_T^*(\delta)\gamma_T\delta^{T-1}u_P^R}{(\sum_{k=1}^{T-1}\gamma_k+\sigma_T^*(\delta)\gamma_T)}>0
        \end{align*}
        and 
        \begin{align*}
            L_2(\sigma_T^*(\delta),\delta)&=\sum_{t=1}^T\alpha_t^P(\sigma_T^*(\delta))\left((t-1)\delta^{t-2}u_P^R-t\delta^{t-1}U_{t+1}^R(\sigma^*,p^*,\delta)-\delta^t\frac{\partial U_{t+1}^R(\sigma^*,p^*,\delta)}{\partial\delta}\right)\\&<\sum_{t=1}^T\alpha_t^P(\sigma_T^*(\delta))\left((t-1)\delta^{t-2}u_P^R-t\delta^{t-1}U_{t+1}^R(\sigma^*,p^*,\delta)\right)<L(\sigma_T^*(\delta),\delta)=0.
        \end{align*}
        Therefore, by the implicit function theorem,
        \begin{align*}
            \frac{d\sigma_T^*(\delta)}{d\delta}=-\frac{L_2(\sigma_T^*(\delta),\delta)}{L_1(\sigma_T^*(\delta),\delta)}>0
        \end{align*}
        and so $\sigma_T^*(\delta)$ is increasing in $\delta$. Observe that
        \begin{align*}
            Pr(\hat{T}_\delta=\emptyset)=\sigma_T^*(\delta)(1-p^*)^T>\sigma_T^*(\delta')(1-p^*)^T=Pr(\hat{T}_{\delta'}=\emptyset).
        \end{align*}

        Now, observe that for any $t<T$,
        \begin{align*}
            Pr(\hat{T}_\delta\leq t|\hat{T}_\delta\neq\emptyset)&=\sum_{k=1}^t\left(\frac{p^*(1-p^*)^{k-1}}{1-\sigma_T^*(\delta)(1-p^*)^T}\right)\\&<\sum_{k=1}^t\left(\frac{p^*(1-p^*)^{k-1}}{1-\sigma_T^*(\delta')(1-p^*)^T}\right)=Pr(\hat{T}_{\delta'}\leq t|\hat{T}\neq\emptyset)
        \end{align*}
        so the distribution of $\hat{T}_{\delta}$ given $\hat{T}_{\delta}\neq\emptyset$ first-order stochastically dominates the distribution of $\hat{T}_{\delta'}$ given $\hat{T}_{\delta'}\neq\emptyset$. Therefore,
        \begin{align*}
            E[\hat{T}_\delta|\hat{T}_\delta\neq\emptyset]>E[\hat{T}_{\delta'}|\hat{T}_{\delta'}\neq\emptyset].
        \end{align*}
        \item As $T\to\infty$, since $Pr(\hat{T}=\emptyset)=\sigma_T^*(1-p^*)^T<(1-p^*)^T$ and $(1-p^*)^T\to0$ as $T\to\infty$, then $Pr(\hat{T}=\emptyset)\to0$. Therefore,
        \begin{align*}
            E[\hat{T}|\hat{T}\neq\emptyset]=\left(\frac{p^*}{1-Pr(\hat{T}=\emptyset)}\right)\sum_{t=1}^T(1-p^*)^{t-1}t\to p^*\sum_{t=1}^\infty(1-p^*)^{t-1}t=\frac{1}{p^*}
        \end{align*}
        as $T\to\infty$.
    \end{enumerate}
\end{proofof}

\begin{proofof}{\bf Theorem \ref{T:exanteequiv}}
    Let $(\sigma_T^*,p^*)$ be a strategy profile. I consider three exhaustive cases: $\sigma_T^*=1$, $\sigma_T^*=0$, and $\sigma_T^*\in(0,1)$. I show that the conditions required of ex-ante equilibria in these three cases correspond to exactly the three cases in Theorem \ref{T:char}. Observe throughout that (i) in the definition of MBE and ex-ante equilibrium are identical. So I verify that conditions (iii) of MBE and (ii) of ex-ante equilibrium are equivalent in each case.

    \noindent\textbf{Case I $(\sigma_T^*=1)$:} Observe that Equation \eqref{E:exante} can be written as
    \begin{align*}
        \max_p\sum_{t=1}^T\delta^{t-1}(1-p)^{t-1}pu_P^R
    \end{align*}
    which is uniquely solved by $p^*=1$. This is the MBE from Case 1 of Theorem \ref{T:char}.

    \noindent\textbf{Case II $(\sigma_T^*=0)$:} Observe that any $p^*$ solves Equation \eqref{E:exante}. This corresponds to Case 2 of Theorem \ref{T:char}.

    \noindent\textbf{Case III $(\sigma_T^*\in(0,1)$:} Follows immediately from Lemma \ref{L:equiv}.

\end{proofof}

\begin{proofof}{\bf Proposition \ref{P:forgetfulchar}}
    Let $(\phi^*,q_1^*)$ compose an MBE. I'll consider three exhaustive cases: $\phi^*=1$, $\phi^*=0$, and $\phi^*\in(0,1)$. I show that these correspond to cases 1, 2, and 3 in the statement of Proposition \ref{P:forgetfulchar}.

    \noindent\textbf{Case I $(\phi^*=1)$:} If $\phi^*=1$, the respondent's program in $t=1$ is 
    \begin{align*}
        \arg\max_qqu_P^R+\delta(1-q)u_P^R
    \end{align*}
    which is solved by $q_1^*=1$, as desired. Clearly, $\phi^*=1$ solves the proposer's program.

    \noindent\textbf{Case II $(\phi^*=0)$:} If $\phi^*=0$, the proposer's program in $t=1$ is
    \begin{align*}
        \arg\max_qqu_P^R+\delta(1-q)u_R^R
    \end{align*}
    which is solved by $q_1^*=0$ if and only if $\delta u_R^R\geq u_P^R\iff\delta\geq u_P^R/u_R^R$. The proposer's program is
    \begin{align*}
        &\max_\phi\phi q_1^*u_P^P+(1-\phi)u_R^P+\delta\phi(1-q_1^*)u_R^P\\\equiv&\max_\phi(1-\phi)u_R^P+\delta\phi u_R^P
    \end{align*}
    since $\alpha=1$. So $\phi^*=0$ solves the proposer's program.

    \noindent\textbf{Case III $(\phi^*\in(0,1))$:} If $\phi^*\in(0,1)$, $q_1^*$ must solve Equation \eqref{E:foc}, since the slope of the proposer's program must be $0$. Suppose first that $q_1^*$ which solves Equation \eqref{E:foc} satisfies $q_1^*\in(0,1)$. Then,
    \begin{align*}
        u_P^R=\delta\left(\phi^*u_P^R+(1-\phi^*)u_R^R\right)\iff \phi^*=\frac{\frac{u_P^R}{\delta}-u_R^R}{u_P^R-u_R^R}
    \end{align*}
    which requires $u_P^R/\delta\leq u_R^R\iff \delta\geq u_P^R/u_R^R$. Since Equation \eqref{E:foc} is linear in $q_1^*$, I find a range of $\delta$ for which
    \begin{align*}
        \left(q_1u_P^P-u_P^R+\delta(1-q_1)\left(\phi^*u_P^P+(1-\phi^*)u_R^P\right)\right)+(1-q_1)\phi^*(u_P^P-u_R^P)
    \end{align*}
    is weakly negative when evaluated at $q_1=0$ and weakly positive when evaluated at $q_1=1$. The above expression is positive when evaluated at $q_1=1$ for any $\delta$, since $u_P^P>u_R^P$. When $q_1=0$, we have that
    \begin{align*}
        \delta\left(\phi^*u_P^P+(1-\phi^*)u_R^P\right)+\phi^*(u_P^P-u_R^P)\leq u_R^P\\\iff \frac{u_P^R-\delta u_R^R}{u_P^R-u_R^R}u_P^P+\frac{\delta u_P^R-u_P^R}{u_P^R-u_R^R}u_R^P+\frac{\frac{u_P^R}{\delta}-u_R^R}{u_P^R-u_R^R}(u_P^P-u_R^P)\leq u_R^P
    \end{align*}
    which holds if and only if
    \begin{align*}
        \delta\leq\overline{\delta}=\frac{(u_P^P-2u_R^P)(u_P^R-u_R^R)+\sqrt{(u_P^P-2u_R^P)^2(u_P^R-u_R^R)^2+4(u_R^Ru_P^P-u_R^Pu_P^R)u_R^P(u_P^P-u_R^P)}}{2(u_R^Ru_P^P-u_R^Pu_P^R)}.
    \end{align*}
    The above threshold $\overline{\delta}<1$ since
    \begin{align*}
        \frac{u_P^R-u_R^R}{u_P^R-u_R^R}u_P^P+\frac{u_P^R-u_P^R}{u_P^R-u_R^R}u_R^P+\frac{u_P^R-u_R^R}{u_P^R-u_R^R}(u_P^P-u_R^P)\\=\frac{(u_P^R-u_R^R)(u_P^P+u_P^P)}{u_P^R-u_R^R}=2u_P^P>u_R^P
    \end{align*}
    and the above threshold $\overline{\delta}>u_P^R/u_R^R$ since
    \begin{align*}
        \frac{u_P^R-u_P^R}{u_P^R-u_R^R}u_P^P&+\frac{(u_P^R)^2/u_R^R-u_P^R}{u_P^R-u_R^R}u_R^P+\frac{u_R^R-u_R^R}{u_P^R-u_R^R}(u_P^P-u_R^P)\\&=\frac{u_P^R(u_P^R/u_R^R-1)}{u_P^R-u_R^R}u_R^P<\frac{u_R^R(u_P^R/u_R^P-1)}{u_P^R-u_R^R}u_R^P\\&=u_R^P
    \end{align*}
    as desired.

\end{proofof}

\begin{proofof}{\bf Proposition \ref{P:acceptany}}
    Observe first that, if $\tilde{x}_1>\delta\int_0^1\hat{x}p^*(\hat{x})d\sigma_2^*(\hat{x})$, then $p^*(\tilde{x}_1)=1$ by condition (i) of the definition of MBE. Therefore, since
    \begin{align*}
        (1-x)p^*(x)\geq1-\delta\int_0^1\hat{x}p^*(\hat{x})d\sigma_2^*(\hat{x})
    \end{align*}
    it must be the case that for any $\tilde{x}_2\in\supp\sigma_2^*$, $\tilde{x}_2\leq\delta\int_0^1\hat{x}p^*(\hat{x})d\sigma_2^*(\hat{x})$. Therefore,
    \begin{align*}
        \supp\sigma_2^*\subseteq\left[0,\delta\int_0^1\hat{x}d\sigma_2^*(\hat{x})\right]
    \end{align*}
    which implies that $\sigma_2^*$ is degenerate on $0$ if $\delta<1$. If $\delta=1$, then $\tilde{x}_2\in\supp\sigma_2^*\implies\tilde{x}_2\leq E_{\sigma_2^*}[x_2]$, so $\tilde{x}_2=E_{\sigma_2^*}[x_2]$, implying that $\sigma_2^*$ is degenerate on some point.
\end{proofof}
\begin{proofof}{\bf Proposition \ref{P:cont}}
    Suppose $(\sigma_1^*,\sigma_2^*)$ and $p_G^*$ occur in a Markov Perfect MBE. If $\delta<1$, then $\sigma_2^*$ puts probability $1$ on $x=0$. Suppose that for some $x\in[0,1]$, $p^*(x)<1$. Then, if beliefs are given by $\alpha(x)$
    \begin{align*}
        \alpha(x)xV-\alpha(x)\delta\int_0^1\hat{x}Vp^*(\hat{x})d\sigma_2^*(\hat{x})+(1-\alpha(x))\delta xV&\leq0\\\implies \alpha(x)x+(1-\alpha(x))\delta x&\leq 0
    \end{align*}
    which can only hold if $x=0$. So $p^*(x)=1$ for all $x>0$. The only pure strategy equilibrium requires that $p^*(0)=1$ as well. Therefore, $\sigma_1^*$ places probability $1$ on $x=0$.

    If $\delta=1$, there is some $x'\in[0,1]$ such that $\sigma_2^*$ is degenerate on $x'$. Since $\delta=1$, $p^*(x)=1$ if $x\geq x'$, and $p^*(x)=0$ otherwise.
\end{proofof}

\begin{proofof}{\bf Proposition \ref{P:contdelay}}
     $(\impliedby):$ Suppose that $\delta=1$. For any points $x_L,x_H\in X$ with $x_L<x_H$, I'll construct an MBE where $\sigma_1^*$ is degenerate on $x_L$ and $\sigma_2^*[x_L]$ mixes between $x_L$ and $x_H$. On the conjectured equilibrium path, $x_H$ is the largest offer that the respondent can receive. So $p^*(x_H)=1$. Clearly, $\forall x\geq x_H$, $p^*(x)=1$. The proposer is willing to mix between $x_L$ and $x_H$ in $t=2$ if
    \begin{align*}
        p^*(x_L)(1-x_L)V&=(1-x_H)V\\p^*(x_L)&=\frac{1-x_H}{1-x_L}.
    \end{align*}
    Sequential rationality for the proposer in $t=2$ requires that $\forall x\not\in\{x_L,x_H\}$, 
    \begin{align*}
        p^*(x)(1-x)V\leq (1-x_H)V
    \end{align*}
    which holds for all $x>x_H$. For $x<x_H$ and $x\neq x_L$, set $p^*(x)=0$. Let $\sigma_2^*[x]$ be degenerate on $x_H$ for all $x\neq x_L$. Thus, the proposer's strategy in $t=2$ is sequentially rational. 
    
    The proposer's strategy in $t=1$ is also sequentially rational. The payoff from offering $x_L$ in $t=1$ to the proposer is
    \begin{align*}
        &p^*(x_L)(1-x_L)V+(1-p^*(x_L))(p^*(x_L)(1-x_L)V+(1-x_H)V)\\&=p^*(x_L)(1-x_L)V+(1-p^*(x_L))(2p^*(x_L)(1-x_L)V)\\&=p^*(x_L)(1-x_L)V(1+2(1-p^*(x_L))>(1-x_H)V
    \end{align*}
    so $x_L$ is strictly a best response in $t=1$.

    Finally, $p^*(x)$ is sequentially rational if $x\geq x_H$. If $x<x_H$ and $x\neq x_L$, since $x$ is off-path, let $\alpha(x)=1$. Then rejecting an offer of $x$ is strictly optimal, since rejecting yields a payoff of $x_HV$, whereas accepting yields a payoff of $xV$. Next, observe that
    \begin{align*}
        \alpha(x_L)=\frac{\gamma}{\gamma+(1-q)(1-\gamma)}
    \end{align*}
    where $q\in[0,1]$ is the probability with which the proposer offers $x_H$ in period $t=2$. In order for the respondent to be willing to mix after $x=x_L$, it must be that
    \begin{align*}
        x_LV&=\alpha(x_L)\left(qx_HV+(1-q)x_LV\right)\\&=\frac{\gamma}{\gamma+(1-q)(1-\gamma)}\left(qx_HV+(1-q)x_LV\right)
    \end{align*}
    I'll demonstrate that this has a solution in $q\in(0,1)$ by applying the Intermediate Value Theorem. Observe first that when $q=0$,
    \begin{align*}
        x_LV>\frac{\gamma}{\gamma+(1-q)(1-\gamma)}\left(qx_HV+(1-q)x_LV\right)=\gamma x_LV.
    \end{align*}
    Moreover, when $q=1$,
    \begin{align*}
        x_LV<x_HV
    \end{align*}
    Therefore, there is a solution $q\in(0,1)$. Therefore, this is an MBE. It features delay, since $p^*(x_L)<1$. 

    \noindent$(\implies):$ Suppose that there is an MBE with delay and, for sake of contradiction, suppose that $\delta<1$. Consider the set
    \begin{align*}
        \bar{X}&=\{x\in[0,1]: x\in\supp\sigma_1^*\;\text{or}\;x\in\supp\sigma_2^*[x']\;\text{for some $x'\in\supp\sigma_1^*$}\}
    \end{align*}
    and let $\bar{x}=\sup\bar{X}$. Then $\bar{x}$ is the largest on-path offer. I'll show that $\bar{x}=0$. Suppose not. Since $p^*(x)$ is sequentially rational, it must be the case that $p^*(x)=1$ for all $x\geq\bar{x}$. By sequential rationality, it must be the case that for any $x<\bar{x}$,
    \begin{align*}
        p^*(x)(1-x)V\leq(1-\bar{x})V
    \end{align*}
    and so $p^*(x)<1$ for all $x<\bar{x}$.

    I'll next show that, for each $\epsilon>0$, $\exists x'_\epsilon\in\supp\sigma_1^*$ such that $\max\supp\sigma_2^*[x_\epsilon']+\epsilon\geq\bar{x}$. Suppose not. Then $\bar{x}\in\supp\sigma_1^*$, but $\bar{x}\not\in\supp\sigma_2^*[x']$ for any $x'\in\supp\sigma_1^*$. Then $\forall x<\bar{x}$,
    \begin{align*}
        (1-\bar{x})V&\geq p^*(x)(1-x)V+\delta(1-p^*(x))E_{\sigma_2^*[x]}[(1-x_2)V]\\\iff (1-\bar{x})&> p^*(x)(1-x)+(1-p^*(x))(1-\bar{x})\\\iff p^*(x)(1-\bar{x})&>p^*(x)(1-x)
    \end{align*}
    contradicting the assumption that $\bar{x}>x$. Therefore, without any loss of generality, suppose $\exists x_1'\in\supp\sigma_1^*$ such that $\bar{x}\in\supp\sigma_2^*[x_1']$. So $\bar{x}$ is on-path in $t=2$. 

    We reach a contradiction by observing the following: $p^*(x')=1$ for some $x'\in(\delta\bar{x},\bar{x})$, contradicting the conclusion that $p^*(x)<1$ for all $x<\bar{x}$. Let $x'\in(\delta\bar{x},\bar{x})$. If $p^*(x')<1$, then
    \begin{align*}
        x'V&\leq \alpha(x')\delta E_{\sigma_2^*}[x_2V]\\&\leq \delta E_{\sigma_2^*}[x_2V]<\delta\bar{x}V
    \end{align*}
    contradicting the assumption that $x'>\delta\bar{x}$. Therefore, $\bar{x}=0$. It follows that the only MBE outcome when $\delta<1$ is the Markov Perfect MBE outcome described in Proposition \ref{P:cont}.
\end{proofof}

\subsection*{Proof of Proposition \ref{P:necessity}}

Throughout, fix an MBE $(\sigma_1^*,\sigma_2^*)$ and $p^*$ with delay where $p^*(x_1)>0$ for all $x_1\in\supp\sigma_1^*$. Since there is delay, $\exists x_1'\in\supp\sigma_1^*$ such that $p^*(x_1')<1$. Let $\bar{x}=\inf\{x:p^*(x)=1\}$. I'll first prove a few intermediate claims.

    \begin{claim}\label{CL1}
        If $x_1\in\supp\sigma_1^*$, then $p^*(x_1)<1$.
    \end{claim}
    \begin{proof}
        Suppose for sake of contradiction that for some $x_1\in\supp\sigma_1^*$, $p^*(x_1)=1$. Then $x_1\geq\bar{x}$. Consider a value $x_1'\in\supp\sigma_1^*$ with $p^*(x_1')<1$. Then, by the proposer's optimality condition, the proposer is indifferent between $x_1$ and $x_1'$
    \begin{align*}
        (1-x_1)p^*(x_1)+(1-p^*(x_1))E_{\sigma_2^*[x_1]}[\hat{x}p^*(\hat{x})]&=(1-x_1')p^*(x_1')+(1-p^*(x_1'))E_{\sigma_2^*[x_1']}[\hat{x}p^*(\hat{x})]
        \\\implies(1-x_1)&\geq (1-x_1')p^*(x_1')+(1-p^*(x_1'))(1-\bar{x})\\\implies (1-x_1)&>(1-\bar{x})
    \end{align*}
    since $x_1'<\bar{x}$. So $\bar{x}> x_1$, a contradiction.
    \end{proof}
    \begin{claim}\label{CL2}
        For some $x_1\in\supp\sigma_1^*$, there exists a sequence $(x_n)\subseteq\bigcup_{x_1\in\supp\sigma_1^*}\supp\sigma_2^*[x_1]$ such that $x_n\to\bar{x}$ as $n\to\infty$.
    \end{claim}
    \begin{proof}
        Suppose for sake of contradiction that , let $\tilde{x}=\sup\{\supp\sigma_2^*[x_1]:x_1\in\supp\sigma_1^*\}$. Clearly, $\tilde{x}<\bar{x}$ and $p^*(\tilde{x})=1$, a contradiction.
    \end{proof}
    
    \begin{claim}
        If $(\sigma_1^*,\sigma_2^*)$ and $p^*$ satisfies the on-path Markov property, then $\bar{x}\in\supp\sigma_2^*[x_1]$ for all $x_1\in\supp\sigma_1^*$.
    \end{claim}
    \begin{proof}
        Denote the on-path offer distribution in $t=2$ by $\tilde{\sigma}_2^*$. Suppose for sake of contradiction that $\bar{x}\not\in\supp\tilde{\sigma}_2^*$. Let $\bar{x}_2=\max\supp\tilde{\sigma}_2^*$. Then $p^*(\bar{x}_2)=1$. Since $\supp\tilde{\sigma}_2^*\subseteq[0,\bar{x}]$, $\bar{x}_2<\bar{x}$, a contradiction. 
    \end{proof}
    \begin{claim}
        If $(\sigma_1^*,\sigma_2^*)$ and $p^*$ satisfies the on-path Markov property. Then there does not exist two distinct $x_1,x_1'\in\supp\sigma_1^*$ with $\alpha(x_1)=\alpha(x_1')=1$.
    \end{claim}
    \begin{proof}
         Denote the on-path offer distribution in $t=2$ by $\tilde{\sigma}_2^*$. Suppose that there exist distinct $x_1,x_1'\in\supp\sigma_1^*$ such that $\alpha(x_1)=\alpha(x_1')$. Since $p^*(x_1),p^*(x_1')<1$, then
        \begin{align*}
            x_1=E_{\tilde{\sigma}_2^*}[\hat{x}p^*(\hat{x})]=x_1'
        \end{align*}
        which is a contradiction.
    \end{proof}
     \begin{claim}
        If $(\sigma_1^*,\sigma_2^*)$ and $p^*$ satisfies the on-path Markov property, then $\exists x_L\in[0,1)$ such that $\supp\sigma_1^*=\{x_L\}$.
    \end{claim}
    \begin{proof}
        Suppose, for sake of contradiction there exist $x_1,x_1'\in\supp\sigma_1^*$ with $x_1<x_1'$. By claim 4, we cannot have $\alpha(x_1)=\alpha(x_1')=1$. Therefore, either $x_1\in\supp\tilde{\sigma}_2^*$ or $x_1'\in\supp\tilde{\sigma}_2^*$. If $x_1\not\in\supp\tilde{\sigma}_2^*$, since $x_1'\in\supp\tilde{\sigma}_2^*$
        \begin{align*}
            x_1&=E_{\tilde{\sigma}_2^*}[\hat{x}p^*(\hat{x})]\\&>\alpha(x_1')E_{\tilde{\sigma}_2^*}[\hat{x}p^*(\hat{x})]=x_1'
        \end{align*}
        which is a contradiction. So I conclude that $x_1\in\supp\tilde{\sigma}_2^*$. Since $\bar{x}\in\supp\tilde{\sigma}_2^*$, by proposer optimality in $t=2$, (observe that, by claim 3, the proposer's expected payoff in $t=2$ is $(1-\bar{x})V$)
        \begin{align*}
            (1-x_1)p^*(x_1)V=(1-\bar{x})V\iff p^*(x_1)=\frac{1-\bar{x}}{1-x_1}.
        \end{align*}
        Now, by proposer optimality in $t=1$,
        \begin{align*}
            (1-x_1)p^*(x_1)+(1-\bar{x})(1-p^*(x_1))&=(1-x_1')p^*(x_1')+(1-\bar{x})(1-p^*(x_1'))\\(1-\bar{x})+(1-\bar{x})\frac{\bar{x}}{1-x_1}&=(1-\bar{x})+p^*(x_1')(\bar{x}-x_1')\\p^*(x_1')&=\frac{(1-\bar{x})\bar{x}}{(1-x_1)(\bar{x}-x_1')}.
        \end{align*}
        But then observe that $p^*(x_1')>\frac{1-\bar{x}}{1-x_1'}$ since
        \begin{align*}
            \frac{(1-\bar{x})\bar{x}}{(1-x_1)(\bar{x}-x_1')}&>\frac{1-\bar{x}}{1-x_1'}\\\iff \bar{x}(1-x_1')&>(1-x_1)(\bar{x}-x_1')\\\iff \bar{x}-\bar{x}x_1'&>\bar{x}-x_1'-\bar{x}x_1+x_1x_1'\\\iff x_1'(1-\bar{x})&>x_1(x_1'-\bar{x})
        \end{align*}
        which holds since the left hand side is strictly positive and the right hand side is strictly negative, since $x_1<x_1'<\bar{x}$. Since $p^*(x_1')>\frac{1-\bar{x}}{1-x_1'}$, the proposer has a profitable deviation in $t=2$ by offering $x_1'$ because
        \begin{align*}
            (1-x_1')p^*(x_1')>(1-x_1')\frac{1-\bar{x}}{1-x_1'}=1-\bar{x}.
        \end{align*}
    \end{proof}

    \begin{proofof}{\bf Proposition \ref{P:necessity}}
        For (1), let $x_1\in\supp\sigma_1^*$. Suppose for sake of contradiction that $x_1>E_{\sigma_2^*[x_1]}[\hat{x}p^*(\hat{x})]$. Then $p^*(x_1)=1$ by respondent optimality, contradicting Claim 1. 
        
        For (2), observe that $\bar{x}\not\in\supp\sigma_1^*$ since $p^*(x_1)<1$ for all $x_1\in\supp\sigma_1^*$ by Claim 1. Fix $x_1\in\supp\sigma_1^*$. Then if $x_2\in\supp\sigma_2^*$ and $x_2\neq\bar{x}$, then $x_2<\bar{x}$. Therefore, $p^*(x_2)<1$ and $\alpha(x_2)>0$. Therefore, $x_2\in\supp\sigma_1^*$. It follows that $\supp\sigma_2^*[x_1]\subseteq\supp\sigma_1^*\cup\{\bar{x}\}$.

        For (3), by Claim 5 $\exists x_L\in[0,1)$ such that $\supp\sigma_1^*=\{x_L\}$. Denote the on-path offer distribution in $t=2$ by $\tilde{\sigma}_2^*$. By (2), $\supp\tilde{\sigma}_2^*\subseteq\supp\sigma_1^*\cup\{\bar{x}\}=\{x_L,\bar{x}\}$. Since $\bar{x}\not\in\supp\sigma_1^*$ by claim 1, then $\bar{x}>x_L$. By claim 3, $\bar{x}\in\supp\tilde{\sigma}_2^*$. Since $p^*(x_L)>0$, then $\alpha(x_L)\neq1$ and so $x_L\in\supp\tilde{\sigma}_2^*$. Letting $x_H=\bar{x}$, we have that $\supp\tilde{\sigma}_2^*=\{x_L,x_H\}$.
    \end{proofof}

    \setlength{\bibsep}{0pt plus 0.3ex}
\bibliographystyle{ecta}
\bibliography{example}

\newpage
\section*{Supplemental Material}
\section{Sequential MBE}\label{A:sequential}

Consider the setting from Section \ref{S:analysis}. While the proposer preferred and the mixing equilibrium in Theorem \ref{T:char} do not depend on the respondent's beliefs off the equilibrium path, the respondent preferred equilibrium does. In this section, I verify that the fair equilibrium satisfies a natural perfection refinement. In particular, I apply the notion of a \textit{multiself sequential equilibrium} from \cite{lambertmarpleshoham2019} to my setting.\footnote{I focus on sequential equilibrium in light of Corollary 2 of \cite{lambertmarpleshoham2019}.}

\begin{defn}[Sequential MBE]
    An MBE $(\sigma^*,p^*,\alpha)$ is a \textit{sequential Multiselves Bargaining Equilibrium} if there exists a sequence $(\sigma^{(n)},p^{(n)},\alpha^{(n)})\to(\sigma^*,p_G^*,\alpha)$ of completely mixed strategies\footnote{That is $\sigma_t^{(n)},p^{(n)}\in(0,1)$ for all $t,n$.} such that for all $t,n$
    \begin{align*}
        \alpha_t^{(n)}=\frac{\sigma_t^{(n)}\gamma_t^{(n)}}{\sum_{k=1}^T\sigma_k^{(n)}\gamma_k^{(n)}}
    \end{align*}
    where I define
    \begin{align*}
        \gamma_t^{(n)}=\frac{(1-p^{(n)})^{t-1}\prod_{\ell=1}^{t-1}\sigma_\ell^{(n)}}{1+\sum_{k=1}^{T-1}(1-p)^k\prod_{\ell=1}^k\sigma_\ell^{(n)}}
    \end{align*}
\end{defn}

Clearly, the proposer preferred equilibrium and mixing equilibrium are sequential MBE. The former satisfies the condition since the respondent accepts any off-path offer, regardless of her beliefs. The latter is a sequential MBE since all offers are on-path. The following proposition verifies that the respondent preferred MBE is also sequential.

\begin{prop}
    Every MBE $(\sigma^*,p^*,\alpha)$ is a sequential MBE.
\end{prop}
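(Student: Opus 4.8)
The plan is to prove that \emph{every} AHPE is a sequential AHPE by handling the three equilibrium types from Theorem \ref{T:char} separately, since the claim only has bite for the fair equilibrium (the other two are dispatched immediately). For the greedy and mixing equilibria the argument is already sketched in the text: in the mixing equilibrium all offers are on-path, so any sequence of completely mixed strategies converging to $(\sigma^*,p_G^*)$ automatically generates the correct limiting beliefs via the stated Bayesian formula; in the greedy equilibrium the respondent accepts every offer regardless of beliefs, so any completely mixed approximating sequence works. Thus the real content is constructing, for the fair equilibrium, an explicit trembling sequence $(\sigma^{(n)},p_G^{(n)})\to(\sigma^*,p_G^*)=(\mathbf{0},p_G^*)$ whose induced beliefs $\alpha^{(n)}$ converge to beliefs $\alpha$ supporting the fair equilibrium --- crucially, beliefs that place enough weight on early periods after an (off-path) greedy offer to make rejection optimal.

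The key steps, in order: First, I would record that in the fair equilibrium $\sigma_t^*=0$ for all $t$, so a greedy offer is off-path and the consistency formula for $\alpha_t^G$ has a $0/0$ form in the limit --- this is exactly why a trembling argument is needed. Second, I would choose the trembles cleverly: take $\sigma_t^{(n)}\to 0$ at rates that make $\gamma_t^{(n)}$ (the unconditional period weights) converge to whatever distribution is needed, then observe that since $\alpha_t^{G,(n)}\propto \sigma_t^{(n)}\gamma_t^{(n)}$, the \emph{relative} rates at which the $\sigma_t^{(n)}$ vanish pin down the limiting conditional beliefs. Concretely, to concentrate belief on $t=1$ (so that $\alpha_1^G\to 1$, giving the respondent a continuation payoff of $\delta V/2 \geq V/4$ from rejecting), I would let $\sigma_1^{(n)}$ tend to $0$ much more slowly than $\sigma_t^{(n)}$ for $t>1$, e.g. $\sigma_1^{(n)}=1/n$ and $\sigma_t^{(n)}=1/n^{t}$ (or any schedule with $\sigma_t^{(n)}/\sigma_1^{(n)}\to 0$ for $t\geq 2$). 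Third, I would verify that with $p_G^{(n)}\to p_G^*$ the $\gamma_t^{(n)}$ stay bounded and bounded away from $0$ at $t=1$, so that $\alpha_1^{G,(n)}\to 1$ and $\alpha_t^{G,(n)}\to 0$ for $t\geq 2$, and that $\alpha^{F,(n)}$ converges to the on-path fair beliefs. Fourth, I would confirm the resulting limit beliefs make $p_G^*=0$ (or any $p_G^*$ in the admissible range from Theorem \ref{T:char}) optimal, closing the loop.

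The main obstacle I anticipate is the belief bookkeeping under simultaneous trembling of \emph{both} $\sigma^{(n)}$ and $p_G^{(n)}$: the $\gamma_t^{(n)}$ depend on products $\prod_{\ell=1}^{t-1}\sigma_\ell^{(n)}$ and on powers $(1-p_G^{(n)})^{t-1}$, so I must check that the chosen vanishing rates for $\sigma_t^{(n)}$ dominate and the belief ratios behave as intended in the limit rather than producing an indeterminate competition between the two trembles. Because $\alpha_t^{G,(n)}$ only involves the ratio $\sigma_t^{(n)}\gamma_t^{(n)}/\sum_k \sigma_k^{(n)}\gamma_k^{(n)}$, the $(1-p_G^{(n)})$ factors and the denominators largely cancel in the ratio, so the limiting conditional belief is governed by $\lim_n \sigma_t^{(n)}\gamma_t^{(n)}/(\sigma_1^{(n)}\gamma_1^{(n)})$; making this precise --- and checking it against the incentive bound $p_G^*\leq(1-\delta)(3/2-\delta)^{-1}$ for the particular beliefs obtained --- is where care is required. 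A clean way to finish is to note that any belief profile with $\alpha_1^G$ sufficiently close to $1$ supports rejection, and the flexibility in choosing tremble rates gives a whole family of supporting sequences, so it suffices to exhibit one.
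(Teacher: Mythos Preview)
Your proposal is correct and essentially identical to the paper's proof: you dispose of the greedy and mixing equilibria exactly as the paper does, and for the fair equilibrium you propose the very same tremble schedule $\sigma_t^{(n)}=n^{-t}$ that the paper uses, with the same target limit $\alpha_1^G=1$. Your direct rate comparison of the terms $\sigma_t^{(n)}\gamma_t^{(n)}$ is in fact cleaner than the paper's L'H\^opital computation, but the idea and construction coincide.
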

\begin{proof}
    The proposer equilibrium and the mixing equilibrium from Theorem \ref{T:char} are clearly sequential MBE --- off-path beliefs are irrelevant. I verify, by construction, that the respondent preferred MBE from Theorem \ref{T:char} is a sequential MBE. Let $(\sigma^*,p^*,\alpha)$ be an MBE such that $\sigma_t^*=0$ for all $t$ and $p^*\leq u_R^P(1-\delta)(u_P^P-\delta u_R^P)^{-1}$. Let $\alpha=1$. Then, let $p^{(n)}=p$, let $\sigma_t^{(n)}=n^{-t}$. Then, clearly $\sigma_t^{(n)}\to0=\sigma_t^*$ as $n\to\infty$. Observe next that 
    \begin{align*}
        \gamma_t^{(n)}=\frac{(1-p^*)^{t-1}n^{-\sum_{\ell=1}^{t-1}\ell}}{1+\sum_{k=1}^{T-1}n^{-\sum_{\ell=1}^k\ell}}\to0
    \end{align*}
    and if $t>1$
    \begin{align*}
        \alpha_t^{(n)}=\frac{n^{-t}\gamma_t^{(n)}}{\sum_{k=1}^Tn^{-k}\gamma_k^{(n)}}\to\lim_{n\to\infty}\frac{-tn^{-t-1}\gamma_t^{(n)}+n^{-t}\frac{d\gamma_t^{(n)}}{dn}}{\log(n)\gamma_1^{(n)}+\sum_{k=2}^T-kn^{-k-1}\gamma_k^{(n)}+n^{-k}\frac{d\gamma_k^{(n)}}{dn}}=0
    \end{align*}
    by L'Hopital's rule. If $t=1$,
    \begin{align*}
        \alpha_t^{(n)}=\frac{n^{-t}\gamma_t^{(n)}}{\sum_{k=1}^Tn^{-k}\gamma_k^{(n)}}\to&\lim_{n\to\infty}\frac{\log(n)\gamma_t^{(n)}+n^{-t}\frac{d\gamma_t^{(n)}}{dn}}{\log(n)\gamma_1^{(n)}+\sum_{k=2}^T-kn^{-k-1}\gamma_k^{(n)}+n^{-k}\frac{d\gamma_k^{(n)}}{dn}}=1
    \end{align*}
    so $\alpha^{(n)}\to\alpha$, as desired.
\end{proof}

\section{Two Absentminded Players}

Another natural extension is the case where both parties are absentminded. Consider the non-transferable utility setting. Unlike the baseline model, the offer does not contain information about the calendar time. For simplicity, assume that $T=2$. The main qualitative conclusions extend readily to the case where $T\geq2$.

A proposer strategy is a value $\sigma^*\in[0,1]$ representing the probability of offering $a_P$ and a respondent strategy is a value $p^*\in[0,1]$ which is the probability of accepting $a_P$. Fair offers are always accepted, as in the baseline model. The respondent holds a belief $\alpha\in[0,1]$ that the calendar time is $t=1$. If $\sigma^*>0$, then $\alpha$ is consistent if
\begin{align*}
    \alpha=\frac{\sigma^*\gamma}{\sigma^*\gamma+\sigma^*(1-\gamma)}=\gamma
\end{align*}
where
\begin{align*}
    \gamma=\frac{1}{1+(1-p^*)\sigma^*}.
\end{align*}
Hence, offering $a_P$ contains no information about calendar time.

\begin{defn}[MBE --- Two Forgetful Players]
    A strategy profile $(\sigma^*,p^*)$ and beliefs $\alpha$ is an MBE if
    \begin{enumerate}[(i)]
        \item $\sigma^*$ is a best response given $p^*$. That is, 
        \begin{align*}
            \sigma^*\in\arg\max_{s\in[0,1]}sp^*u_P^P+(1-s)u_R^P+s(1-p^*)\alpha\delta\left(\sigma^*p^*u_P^P+(1-\sigma^*)u_R^P\right)
        \end{align*}
        \item $p^*$ is a best response to $\sigma^*$. That is,
        \begin{align*}
            p^*\in\arg\max_{p\in[0,1]}pu_P^R+(1-p)\alpha\delta\left(\sigma^*p^*u_P^R+(1-\sigma^*)u_R^R\right)
        \end{align*}
        \item $\sigma^*>0$ implies that $\alpha=\gamma$.
    \end{enumerate}
\end{defn}

As in the baseline model, it is immediate that there is an MBE with $\sigma^*=1$ and $p^*=1$. There also is an MBE with $\sigma^*=0$ and $p^*=0$ for $\delta\geq u_P^R/u_R^R$. Moreover, there is an equilibrium with delay where both players mix with probabilities $\sigma^*\in(0,1)$ and $p^*\in(0,1)$ satisfying
\begin{align}\label{E:two1}
        p^*u_P^P+(1-p^*)\gamma\delta\left(\sigma^*p^*u_P^P+(1-\sigma^*)u_R^P\right)=u_R^P
    \end{align}
and
\begin{align}\label{E:twotwo}
        u_P^R=\gamma\delta\left(\sigma^*p^*u_P^R+(1-\sigma^*)u_R^R\right).
    \end{align}
    Similar to the baseline model, an interior solution to Equations \eqref{E:two1} and \eqref{E:twotwo} exists and is unique if and only if $\delta\geq u_P^R/u_R^R$.
\begin{prop}\label{P:twoabsent}
    Let $T=2$ and $\delta\geq u_P^R/u_R^R$. The following cases exhaust all possibilities of strategy profiles $(\sigma^*,p^*)$ that can occur in MBE when both parties are absentminded:
    \begin{enumerate}
        \item (Proposer Preferred Equilibrium): $\sigma^*=1$ and $p^*=1$.
        \item (Respondent Preferred Equilibrium): $\sigma^*=0$ and $p^*\leq\frac{u_R^P(1-\delta)}{u_P^P-\delta u_R^P}$.
        \item (Mixing Equilibrium): $\sigma^*\in(0,1)$ and $p^*\in(0,1)$ solve Equations \eqref{E:two1} and \eqref{E:twotwo}.
    \end{enumerate}
\end{prop}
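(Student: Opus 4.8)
The plan is to exploit the fact that, holding fixed the equilibrium continuation play, both objectives in conditions (i) and (ii) are \emph{linear} in the agent's own choice variable. Each best response is therefore bang-bang: an agent picks the corner $0$, the corner $1$, or an interior value only when the relevant slope vanishes. I would organize the argument by the value of $\sigma^*$, in each case computing the sign of the proposer's slope $p^*\tfrac{3V}{4}-\tfrac{V}{2}+(1-p^*)\gamma\delta C_P$ and the respondent's slope $\tfrac{V}{4}-\alpha\delta C_R$, where $C_P$ and $C_R$ denote the proposer's and respondent's continuation values. Recall that consistency condition (iii) pins down $\alpha=\gamma$ only when $\sigma^*>0$, so the belief is a free parameter in the $\sigma^*=0$ case.

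First I would dispatch the two pure-strategy cases. If $\sigma^*=1$ then $\alpha=\gamma=\tfrac{1}{2-p^*}$ and $C_R=p^*\tfrac{V}{4}$, so the respondent's slope is $\tfrac{V}{4}\bigl(1-\tfrac{\delta p^*}{2-p^*}\bigr)$, which is strictly positive for every $p^*\in[0,1]$ when $\delta<1$; hence the unique respondent best response is $p^*=1$, and against $p^*=1$ the proposer's slope equals $\tfrac{V}{4}>0$. This confirms the greedy equilibrium and rules out any other profile with $\sigma^*=1$. If $\sigma^*=0$ then $\gamma=1$ and $C_P=\tfrac{V}{2}$, so the proposer's slope is nonpositive exactly when $p^*\le(1-\delta)(3/2-\delta)^{-1}$, which is the bound in part~2. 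Since a greedy offer is then off path, I may choose $\alpha=\tfrac{1}{2\delta}\in(0,1]$ (valid because $\delta>1/2$) to make the respondent's slope vanish, so that any such $p^*$ is a best response. This establishes the fair equilibria and shows no profile with $\sigma^*=0$ lies outside part~2.

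The substantive case is $\sigma^*\in(0,1)$, which forces the proposer to be indifferent. I would first exclude the respondent corners: $p^*=1$ would give the proposer slope $\tfrac{V}{4}>0$ (contradicting interiority of $\sigma^*$), and $p^*=0$ would force, through the proposer's indifference, $\sigma^*=(\delta-1)/(1+\delta)<0$, which is impossible. Hence $p^*\in(0,1)$ and both indifference conditions \eqref{E:two1} and \eqref{E:twotwo} hold with $\alpha=\gamma$. To solve the system I would write $C_P=C_R+\sigma^*p^*\tfrac{V}{2}$ and use the respondent's condition $\gamma\delta C_R=\tfrac{V}{4}$ to eliminate the factor $\gamma\delta$ from \eqref{E:two1}; this collapses the proposer's equation to the $\delta$-free relation $3\sigma^*p^*-2\sigma^*-4p^*+2=0$, i.e. $p^*=\tfrac{2(1-\sigma^*)}{4-3\sigma^*}$, which already lies in $(0,\tfrac12)$ for $\sigma^*\in(0,1)$. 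Equating this with the expression for $p^*$ obtained by solving \eqref{E:twotwo}, namely $p^*=\tfrac{1-2\delta+\sigma^*(1+2\delta)}{\sigma^*(1+\delta)}$, eliminates $p^*$ and yields the single quadratic $Q(\sigma^*)=(1+4\delta)\sigma^{*2}-(12\delta-1)\sigma^*+(8\delta-4)=0$.

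Finally I would read existence and uniqueness off the endpoints of $Q$. The leading coefficient $1+4\delta$ is positive, while $Q(0)=4(2\delta-1)$ and $Q(1)=-2$ for every $\delta$. When $\delta>1/2$ we have $Q(0)>0>Q(1)$, so the upward-opening parabola has exactly one root in $(0,1)$, and the associated $p^*\in(0,\tfrac12)$ is automatically interior, giving the mixing equilibrium of part~3; when $\delta\le1/2$ both $Q(0)\le0$ and $Q(1)<0$, so $Q$ stays negative on $[0,1]$ and has no interior root, which delivers the "only if'' direction. The main obstacle is precisely this reduction: arranging the substitution so that $\gamma\delta$ cancels and the two coupled indifference conditions become one quadratic with transparent endpoint values — once that is in hand, the threshold $\delta=1/2$ and uniqueness follow immediately from the signs of $Q(0)=4(2\delta-1)$ and $Q(1)=-2$.
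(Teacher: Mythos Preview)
Your proposal is correct and follows essentially the same route as the paper's own proof: the same three-case split on $\sigma^*$, the same elimination of $\gamma\delta$ via \eqref{E:twotwo} to obtain the $\delta$-free relation $p^*=\tfrac{2(1-\sigma^*)}{4-3\sigma^*}$, the same second expression for $p^*$ from \eqref{E:twotwo}, and the identical quadratic $Q(\sigma^*)=(1+4\delta)\sigma^{*2}+(1-12\delta)\sigma^*+8\delta-4$ analyzed via $Q(0)=8\delta-4$ and $Q(1)=-2$. Your treatment of the corner cases is slightly more explicit (e.g., computing $\sigma^*=\tfrac{\delta-1}{1+\delta}$ when $p^*=0$ rather than just noting the proposer's indifference fails), but this is the same argument.
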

\begin{proof}
    Let $(\sigma^*,p^*)$ compose an MBE. I'll consider three exhaustive cases: $\sigma^*=1$, $\sigma^*=0$, and $\sigma^*\in(0,1)$. I'll show that the first two correspond to cases 1, 2, and 3 of Proposition \ref{P:twoabsent}.

    \noindent\textbf{Case I $(\sigma^*=1)$:} If $\sigma^*=1$, the respondent's program is
    \begin{align*}
        \arg\max_{p\in[0,1]}pu_P^R+(1-p)\gamma\delta p^*u_P^R.
    \end{align*}
    Since $u_P^R>\gamma\delta p^*u_P^R$, then $p^*=1$ solves the respondent's program.

    \noindent\textbf{Case II $(\sigma^*=0)$:} If $\sigma^*=0$, observe that $\gamma=1$. Since the proposer's program is
    \begin{align*}
        \arg\max_{s\in[0,1]}sp^*u_P^P+(1-s)u_R^P+s(1-p^*)\delta u_R^P
    \end{align*}
    which is solved by $\sigma^*=0$ if and only if
    \begin{align*}
        p^*u_P^P+(1-p^*)\delta u_R^P&\leq u_R^P\\\iff p^*\leq\frac{u_R^P(1-\delta)}{u_P^P-\delta u_R^P}.
    \end{align*}
    Next, the respondent's program is
    \begin{align*}
        \arg\max_{p\in[0,1]}pu_P^R+(1-p)\alpha\delta u_R^R.
    \end{align*}
    If $p^*=0$, $\alpha=1$ supports $(\sigma^*,p^*)$ as an MBE. If $p^*>0$, $\alpha$ solving
    \begin{align*}
        u_P^R&=\alpha\delta u_R^R\\\iff\alpha&=\frac{u_P^R}{\delta u_R^R}
    \end{align*}
    supports $(\sigma^*,p^*)$ as an MBE.

    \noindent\textbf{Case III $(\sigma^*\in(0,1)$:} Let $\sigma^*\in(0,1)$. Since $\sigma^*>0$, $\alpha=\gamma$. The proposer's program satisfies
    \begin{align*}
        p^*u_P^P+(1-p^*)\gamma\delta\left(\sigma^*p^*u_P^P+(1-\sigma^*)u_R^P\right)=u_R^P
    \end{align*}
    which does not hold if $p^*=0$ or if $p^*=1$. So $p^*\in(0,1)$. Therefore, the respondent's program satisfies
    \begin{align*}
        u_P^R=\gamma\delta\left(\sigma^*p^*u_P^R+(1-\sigma^*)u_R^R\right).
    \end{align*}
    These are equations \eqref{E:two1} and \eqref{E:twotwo}, respectively. Define the following expressions:
    \begin{align*}
        Q_{1,\sigma}(p)&=pu_P^P+\frac{\delta(1-p)(\sigma pu_P^P+(1-\sigma)u_R^P)}{1+(1-p)\sigma}-u_R^P\\Q_{2,p}(\sigma)&=\frac{\delta(\sigma p u_P^R+(1-\sigma)u_R^R)}{1+(1-p)\sigma}-u_P^R.
    \end{align*}
    Equations \eqref{E:two1} and \eqref{E:twotwo} have a solution $(\sigma^*,p^*)\in(0,1)^2$ if and only if $Q_{1,\sigma^*}(p^*)=0$ and $Q_{2,p^*}(\sigma^*)=0$. Observe first that, given $\sigma\in[0,1]$, $Q_{1,\sigma}(p)$ is a concave in $p$ with
    \begin{align*}
        Q_{1,\sigma}(0)&=\frac{\delta(1-\sigma)u_R^P}{1+\sigma}-u_R^P<0\\
        Q_{1,\sigma}(1)&=u_P^P-u_R^P>0
    \end{align*}
    so $Q_{1,\sigma}(p)=0$ has a unique solution in $[0,1]$ for any $\sigma\in[0,1]$. Moreover, observe that
    \begin{align*}
        Q_{2,p}'(\sigma)=\frac{(1+(1-p)\sigma)(\delta pu_P^R-u_R^R)-\delta(\sigma pu_P^R+(1-\sigma)u_R^R)(1-p)}{(1+(1-p)\sigma)^2}<0.
    \end{align*}
    And observe that
    \begin{align*}
        Q_{2,p}(0)&=\delta u_R^R-u_P^R\geq0\\Q_{2,p}(1)&=\frac{\delta pu_P^R}{2-p}-u_P^R<0.
    \end{align*}
    Since $Q_{2,p}(\sigma)$ is monotonically decreasing in $\sigma$, positive at $\sigma=0$, and negative at $\sigma=1$, $Q_{2,p}(\sigma)=0$ has a unique solution in $[0,1]$ for any $p\in[0,1]$.
    
\end{proof}

\end{document}